\newtheorem{lemma}{Lemma}
\begin{document}

\title{Scaling of the quantum approximate optimization algorithm on superconducting qubit based hardware}

\author{Johannes Weidenfeller}
\affiliation{IBM Quantum, IBM Research Europe -- Zurich}
\affiliation{ETH Zurich} 
\orcid{0000-0003-4845-0476}
\author{Lucia C. Valor}
\affiliation{IBM Quantum, IBM Research Europe -- Zurich}
\orcid{0000-0003-1091-3172}
\author{Julien Gacon}
\affiliation{IBM Quantum, IBM Research Europe -- Zurich}
\affiliation{Institute of Physics, Ecole Polytechnique F\'ed\'erale de Lausanne (EPFL)}
\orcid{0000-0002-8617-1825}
\author{Caroline Tornow}
\affiliation{IBM Quantum, IBM Research Europe -- Zurich}
\affiliation{ETH Zurich}
\orcid{0000-0003-3214-9013}
\author{Luciano Bello}
\affiliation{IBM Quantum, IBM Research Europe -- Zurich}
\orcid{0000-0002-3530-6283}
\author{Stefan Woerner}
\affiliation{IBM Quantum, IBM Research Europe -- Zurich}
\orcid{0000-0002-5945-4707}
\author{Daniel J. Egger}
\email{deg@zurich.ibm.com}
\affiliation{IBM Quantum, IBM Research Europe -- Zurich}
\orcid{0000-0002-5523-9807}

\begin{abstract}
Quantum computers may provide good solutions to combinatorial optimization problems by leveraging the Quantum Approximate Optimization Algorithm (QAOA).
The QAOA is often presented as an algorithm for noisy hardware.
However, hardware constraints limit its applicability to problem instances that closely match the connectivity of the qubits.
Furthermore, the QAOA must outpace classical solvers.
Here, we investigate swap strategies to map dense problems into linear, grid and heavy-hex coupling maps.
A line-based swap strategy works best for linear and two-dimensional grid coupling maps.
Heavy-hex coupling maps require an adaptation of the line swap strategy.
By contrast, three-dimensional grid coupling maps benefit from a different swap strategy.
Using known entropic arguments we find that the required gate fidelity for dense problems lies deep below the fault-tolerant threshold.
We also provide a methodology to reason about the execution-time of QAOA.
Finally, we present a QAOA Qiskit Runtime program and execute the closed-loop optimization on cloud-based quantum computers with transpiler settings optimized for QAOA.
This work highlights some obstacles to improve to make QAOA competitive, such as gate fidelity, gate speed, and the large number of shots needed.
The Qiskit Runtime program gives us a tool to investigate such issues at scale on noisy superconducting qubit hardware.
\end{abstract}

\maketitle

\section{Introduction}

Gate-based quantum computers process information by applying unitary operations on information stored in qubits.
Such computers may provide an advantage for complex computational tasks in chemistry~\cite{Moll2018, Kandala2018, Ganzhorn2019}, finance~\cite{Woerner2019, Braine2019, Egger2020} and combinatorial optimization~\cite{Farhi2014, Farhi2014b}.
We focus on the Quantum Approximate Optimization Algorithm (QAOA)~\cite{Farhi2014, Farhi2014b, Yang2017} which maps combinatorial optimization problems, for instance,  quadratic unconstrained binary optimization (QUBO) problems with $n$ variables
\begin{align}
    \min_{x\in\{0,1\}^n}x^T\Sigma x, \: \Sigma \in \mathbb{R}^{n \times n},
\end{align}
to the problem of finding the ground state of an Ising Hamiltonian, $ H_C$~\cite{Farhi2014}.
Here, $ H_C$ is constructed by mapping each of the $n$ decision variables to a qubit by the relation $x_i=(1-z_i)/2$ and replacing $z_i$ by a Pauli spin operator $ Z_i$ to obtain $ H_C$~\cite{Lucas2014, Lodewijks2019}.
Two qubits $i,j$ thus only interact through $ Z_i Z_j$ if the corresponding quadratic term $\Sigma_{i,j}$ is not zero.
The QAOA first creates an initial state which is the ground state of a mixer Hamiltonian $ H_M$.
A common choice of $ H_M$ and initial state is $-\sum_{i=0}^{n-1} X_i$ and $\ket{+}^{\otimes n}$ which is easy to prepare.
Here, $X_i$ are Pauli $X$ operators.
Next, a depth-$p$ QAOA circuit creates the trial state $\ket{\psi(\boldsymbol{\beta}, \boldsymbol{\gamma})}$ for vectors $\boldsymbol{\gamma}, \boldsymbol{\beta}\in {\mathbb R}^p$ by applying $\exp{(-i\beta_k H_M)}\exp{(-i\gamma_k H_C)}$ at each layer $k=1,...,p$, implemented by $R_{X}(\beta)=\exp(-i\beta X/2)$ and $R_{ZZ}(\gamma)=\exp(-i\gamma ZZ/2)$ gates.
A classical optimizer seeks the optimal values of $\boldsymbol{\beta}$ and $\boldsymbol{\gamma}$ to create a trial state which minimizes the energy of $ H_C$.
The potential for a quantum advantage of QAOA and its variants over highly-optimized classical solvers, such as CPLEX~\cite{cplex}, must be explored empirically.
Such benchmarks must be two dimensional where both the quality and the time to reach the proposed solution matter~\cite{Kardashin2021, Harrigan2021, Sankar2021, Khumalo2021}.

Business-relevant problems often require budget or capacity constraints, and thus, $\Sigma$ tends to be dense~\cite{Markowitz1952} and the corresponding interaction graph  non-planar~\cite{Barahona1988}.
Implementing such problems on superconducting qubit~\cite{Devoret2013, Krantz2019} platforms is hindered by the limited qubit connectivity, expressed by the coupling map, and therefore requires SWAP gates~\cite{Harrigan2021}.
By contrast, cold-atomic architectures based on the Rydberg blockade~\cite{Isenhower2011} and trapped ions~\cite{Pagano2020} may overcome this issue~\cite{Zhou2020} but in turn suffer from low repetition rates which limits the speed at which shots are gathered~\cite{Labuhn2016}.

In this work we discuss all aspects relevant to scaling QAOA on superconducting qubits.
First, in Sec.~\ref{sec:connectivity}, we discuss strategies to map dense problems into linear, grid~\cite{Harrigan2021} and heavy-hex~\cite{Chamberland2020} coupling maps. 
Next, in Sec.~\ref{sec:hardware}, we estimate the quantum hardware requirements needed to solve such problems.
Here, we estimate in Sec~\ref{sec:fidelity} gate fidelity requirements for problems of varying density and present a methodology to reason about the run-time of QAOA in Sec.~\ref{sec:runtime}.
Indeed, QAOA can only provide an advantage by yielding better solutions than classical optimizers or comparable solutions in a  shorter time.
In Sec.~\ref{sec:runtime_program} we present a QAOA Qiskit Runtime program  to explore the QAOA scaling on noisy hardware.
Finally, we discuss these results and conclude in Sec.~\ref{sec:discussion}.

\begin{figure*}[htb!]
\includegraphics[width=\textwidth]{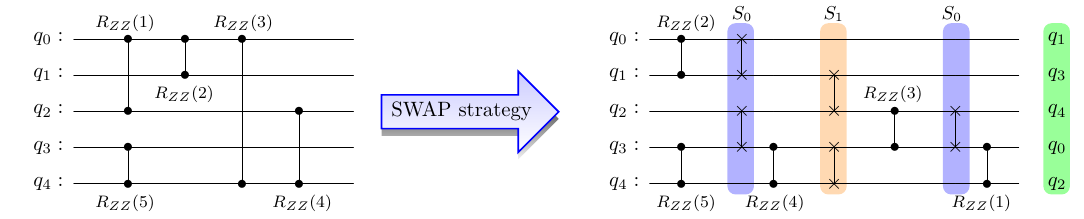}
\caption{
Transpilation of a five-qubit $\exp(-i\gamma H_C)$ circuit (left) to a line coupling map using $\mathcal{S}=\{S_0, S_1\}$. 
The swap layers alternate between the sets $S_0=\{{\rm SWAP}_{0,1}, {\rm SWAP}_{2,3}\}$ and $S_1=\{{\rm SWAP}_{1,2}, {\rm SWAP}_{3,4}\}$.
In the transpiled circuit (right) a redundant ${\rm SWAP}_{0,1}$ gate is removed from the last layer.
The resulting qubit mapping is highlighted in green.
}\label{fig:transpilation_example}
\end{figure*}

\section{Effects of limited device connectivity\label{sec:connectivity}}

The qubit connectivity in, e.g., superconducting qubit devices, is limited by engineering constraints and the need to avoid unwanted effects like cross-talk~\cite{Schutjens2013, McKay2019, Zhao2021}.
Typically, qubits are arranged in a planar graph, called the \emph{coupling map} and two-qubit gates can only be applied to adjacent qubits.
Therefore, additional SWAP gates are inserted into the circuits to make them hardware-compatible, a task known as qubit routing~\cite{Cowtan2019, Zulehner2019}.
The number of gates after transpilation to a hardware device thus depends on the problem, the coupling map, and the routing algorithm.
We propose a set of hardware dependent routing algorithms that perform particularly well on dense circuits of commuting operators, such as the cost operator in a QAOA on a complete graph.
We investigate the resulting circuit depth and gate count for linear, grid~\cite{Harrigan2021}, and heavy-hex~\cite{Chamberland2020} coupling maps.

\subsection{Hardware-optimized transpiler pass}\label{sec:hardware_optimized_transpiler_pass}
Swap transpiler passes typically divide quantum circuits into layers of simultaneously executable gates on the coupling map~\cite{Cowtan2019}.
They transition between the qubit mappings of different layers, i.e., a positioning of logical to physical qubits, by inserting SWAP gates consistent with the coupling map.
Here, a logical qubit is a qubit in an algorithm and a physical qubit is a hardware qubit such as a transmon~\cite{Koch2007}.
Mapping circuits to hardware is a hard optimization problem with combinatorial scaling, even on grid coupling maps~\cite{Lye2015}.
A variety of heuristic algorithms have therefore been introduced~\cite{Kole2016} and different coupling maps studied~\cite{Bhattacharjee2018, Farghadan2019}.
Application-specific transpilers leverage the mathematical properties of the application to reduce circuit depth.
For example, the 2QAN transpiler exploits the flexibility of permuting Trotter operators in two-local Hamiltonians~\cite{Lao2021}.
Swap synthesis is simpler when the considered gates commute, as in the cost layer of QAOA~\cite{Jin2021}.
Here, various strategies have been developed such as applying gates according to some ranking or stitching layers of $R_{ZZ}$ gates together~\cite{Alam2020}.
Transpiler passes that do not consider commutativity yield sub-optimal gate counts for circuits with a high number of commuting gates.

We develop a transpiler pass that first identifies the subset of all device qubits to run on and then applies a corresponding \emph{swap strategy}.
The swap strategy exploits gate commutativity by reordering commuting gates and inserting layers of SWAP gates from a set of predefined \emph{swap layers} $\mathcal{S}=\{S_0,...,S_K\}$.
Throughout this work a \emph{layer} is a set of simultaneously executable gates on the coupling map and has thus depth one.
Therefore, for a given coupling map, a swap strategy is a series of swap layers $S_{k_1}, S_{k_2}, ...$ of length $L_S$ applied in a predefined order and chosen from $\mathcal{S}$, i.e. $k_i\in\{0,...,K\}$.
A swap strategy applies the following steps:

\begin{enumerate} 
    \item Split the circuit into sequential sets of commuting gates $\mathcal{T}_1, \mathcal{T}_2, ...$, and choose the first one, i.e. $i=1$, as the current set.
    \item Repeat the following steps (a) to (d)  until all gates in the current set $\mathcal{T}_i$ are applied, see Fig.~\ref{fig:transpilation_example}. Set $j=1$, and
    \begin{enumerate}
        \item select all remaining gates $E_{j}\subseteq \mathcal{T}_i$ from the current set that are executable given the current qubit mapping and remove them from $\mathcal{T}_i$.
        \item Partition the selected gates $E_{j}$ into subsets of simultaneously executable gates $\mathcal{G}_1, \mathcal{G}_2, ...$ either by sorting them according to a provided edge coloring of the coupling map or by greedily building the sets.
        \item Iterate through the subsets $\mathcal{G}_1, \mathcal{G}_2, ...$, e.g., in decreasing set size, to simultaneously apply all gates in each set.
        \item Apply a single swap layer $\mathcal{S}_{f(i,j)}$ to alter the current qubit mapping. Here, $f(i,j)$ is the order in which we apply the swap layers. Increment $j$ and move to step (a) if $\mathcal{T}_i$ is not empty.
    \end{enumerate}
    \item Remove superfluous SWAP gates at the end of the circuit and continue to Step 2 with the next set of commuting gates $\mathcal{T}_{i+1}$, or terminate if all gates are applied.
\end{enumerate}
We call a swap strategy optimal if it leads to full connectivity with the least possible number of swap layers.
Our task is thus to find a good set of swap layers $\mathcal{S}$, the order in which to apply them $f(i,j)$, and the initial qubit mapping for a given problem and coupling map.
For example, in QAOA, the first set $\mathcal{T}_1$ creates the inital state which is trivial to apply as it is made of single-qubit gates.
Next, $\mathcal{T}_2$ corresponds to the cost operator which requires SWAP gates.
We note that for QAOA the order in which the gates are applied in Step 2(c) is chosen to leverage gate cancellations between $R_{ZZ}$ gates in $E_j$ and SWAP gates in $S_{f(i,j)}$.
In sub-sections~\ref{sec:swap_strategies} and \ref{sec:circuit_depth_and_gate_count} we present the swap strategies and the scaling of their gate count and depth with problem size and density, respectively.

\subsection{Swap strategies}\label{sec:swap_strategies}

The number of swap layers $L_S$ required to implement a quantum circuit $U$ of commuting two-qubit gates under a fixed swap strategy $\mathcal{S}$ depends on its density $D$ and structure.
Here, $D$ is the number of two-qubit terms normalized by its maximum possible number $n(n-1)/2$.
We describe $U$ with an interaction graph $G_\text{int}$ where vertices correspond to qubits and edges to two-qubit gates.
Let $G_{L_S}$ be the graph of all possible qubit interactions implementable after $L_S$ swap layers of $\mathcal{S}$.
The circuit can be implemented with $L_S$ swap layers of $\mathcal{S}$ if $G_\text{int}$ can be embedded in $G_{L_S}$.
The graphs $G_{L_S}$ therefore describe the structure of a potential $U$ that can be implemented after $L_S$ swap layers. 
The density of $G_{L_S}$ bounds the possible density of $U$ from above. 
Vice versa, we obtain a lower bound $L_S(D)$ on the number of swap layers required to implement a given $U$ with a density $D$ using a particular swap strategy. 
Indeed, a $G_{L_S}$ that achieves the same density as $U$ does not necessarily have the required structure to implement $U$.
We now discuss swap strategies that reach full connectivity, i.e. $D=1$.

\subsubsection{Linear coupling map\label{sec:line_strat}}
Transpiling arbitrary quantum circuits to a line coupling map has been studied in Ref.~\cite{Hirata2009} and when all the gates commute it can also be done optimally~\cite{Harrigan2021, Jin2021}.
For a line coupling map with $n$ qubits the swap strategy which alternates between two swap layers $S_0$ and $S_1$ which apply SWAP gates on all even and odd numbered edges~\footnote{We call an edge $(i,j)$ of a line graph even if $i=0 \mod 2$ and odd otherwise.}, respectively, is provably optimal, see Appendix~\ref{sec:swap_strategy_details}.
This strategy requires $L_S=n-2$ swap layers and is illustrated in Figs.~\ref{fig:transpilation_example} and~\ref{fig:line_strategy}.
For this strategy the minimum number of swap layers needed to reach a density $D$ is $L_S(D)=(n-2)D$.

\begin{figure}
\centering
\includegraphics[width=0.8\columnwidth]{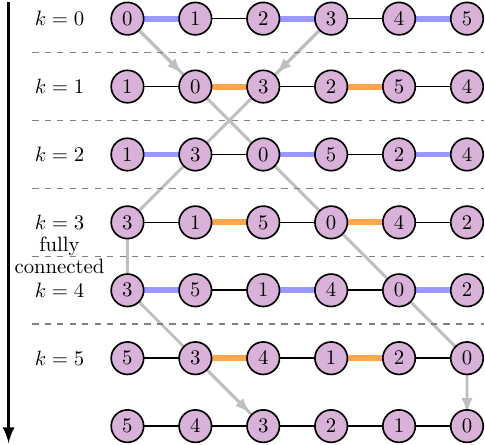}
\caption{
Optimal line swap strategy for $n=6$.
Swap layers are alternately applied to even (in blue) and odd (in orange) numbered edges.
The numbers and gray lines show the positions of logical qubits as they move through the line.
Full connectivity is reached after $k=n-2=4$ layers and the line is fully reversed after $k=n=6$ layers.}\label{fig:line_strategy}
\end{figure}

\subsubsection{Grid coupling map}
We adapt the line graph strategy to the two- and three-dimensional nearest-neighbour grid coupling maps to create strategies that reach full connectivity after $n/2 + \mathcal{O}(\sqrt{n})$ and $n/4 + \mathcal{O}(n^{2/3})$ swap layers, respectively.
For the two dimensional case we consider square grids with $x$ rows and columns, i.e. $n=x^2$.
The swap strategy has four layers $S_0,...,S_3$ and repeats two steps until full connectivity is reached:
\begin{enumerate}
    \item Apply $x-1$ steps of the line swap strategy to each row.
    Importantly, in the same swap layer, the SWAP gates in one of two neighboring rows are applied on even edges while in the other row they are applied on odd edges, see $S_0$ and $S_1$ in Fig.~\ref{fig:grid_strategy}. 
    \item Swap rows by applying two steps of the line swap strategy to each column in parallel, see $S_2$ and $S_3$ in Fig.~\ref{fig:grid_strategy}. Applying $S_2$ and $S_3$ directly after each other guarantees that each row has, up to edge effects, two new neighboring rows, one above and one below itself.
\end{enumerate}
Step 1 creates full connectivity within each row and between adjacent rows. 
Note that while $x-2$ layers are sufficient to reach full connectivity in each row, see Sec.~\ref{sec:line_strat}, we need $x-1$ swap layers to connect all qubits of adjacent rows.
Step 2 swaps rows such that all rows are adjacent to one another at some point in the swap process.
Each iteration of steps 1 and 2 requires $x + 1$ swap layers.
After repeating both steps $\lceil \frac{x}{2} \rceil$ times,  full connectivity is reached. 
In total, the number of swap layers is
\begin{displaymath}
\left\lceil \frac{x}{2} \right\rceil (x+1) \leq \frac{n}{2} + \sqrt{n}+\frac{1}{2}.
\end{displaymath}
This strategy generalizes to grids of higher dimension: for $\eta$-dimensional grid coupling maps a problem density $D$ requires at least $L_S(D)=nD/2^{\eta-1}+\mathcal{O}(n^{1-1/\eta})$ swap layers.
Details are
 in Appendix~\ref{sec:swap_strategy_details}.

\begin{figure}[tb]
 \centering
 \includegraphics[width=0.85\columnwidth]{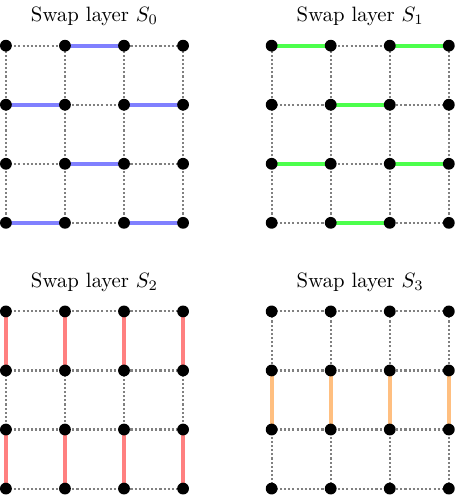}
    \caption{Swap strategy for a $4 \times 4 $  grid.
    The black dots are physical qubits and dashed lines indicate the coupling map.
    The solid colored edges indicate the swap layer of the swap strategy $\mathcal{S}=\{S_0, S_1, S_2, S_3\}$.
    $S_0$ and $S_1$ are repeatedly applied to reach full connectivity in each row and between neighbouring rows (step 1).
    Next, swap layers $S_2$ and $S_3$ are each applied once such that each row becomes adjacent to two different rows (step 2).}
    \label{fig:grid_strategy}
\end{figure}

\subsubsection{Heavy-hex coupling map}
IBM Quantum systems have a heavy-hex coupling map, i.e., a regular hexagonal lattice with additional nodes inserted on each edge~\cite{Chamberland2020}.
A simple strategy applies the optimal line graph strategy to the longest continuous line embedded in the heavy-hex graph. 
This strategy cannot reach full connectivity since a single continuous line does not include all qubits.
However, a swap strategy that applies the line strategy to the longest line in the heavy-hex graph and periodically swaps qubits positioned in the line with qubits that are not part of the line reaches full connectivity after at most
\begin{displaymath}
n + \sqrt{n} + 61 = n + \mathcal{O}(\sqrt{n})
\end{displaymath} swap layers.
Details and a proof are given in Appendix~\ref{sec:swap_strategy_details}.
A lower bound on the number of swap layers to implement a problem with density $D$ is almost linear and we approximate it by $nD$, see Fig.~\ref{fig:graph_density_vs_swaps} and Tab.~\ref{tab:depth_gate_count}.

\subsection{Circuit depth and gate count\label{sec:circuit_depth_and_gate_count}}

\begin{table*}
    \centering
    \begin{tabular}{l r r r r} \hline\hline
        Coupling & $L_S$ & $\quad$Number of CNOT & $\quad$Total number of CNOT gates & $\quad$Average number of CNOT\\ 
        map & & layers $L_\text{cx}$ & per $\exp(-i\gamma H_C)$ layer & gates per layer $l_\text{cx}$\\ \hline
        Line & $Dn$ & $3L_S=3Dn$ & $\frac{3}{2}nL_S=\frac{3}{2}Dn^2$ & $n/2$\\ 
        Grid & $\frac{1}{2}Dn$ & $7 L_S=\frac{7}{2}Dn$& $\frac{7}{2}nL_S=\frac{7}{4}Dn^2$ & $n/2$ \\ 
        3D Grid & $\frac{1}{4} Dn$ & $11 L_S=\frac{11}{4}Dn$ & $\frac{11}{2}nL_S=\frac{11}{8}Dn^2$ & $11n/32$ \\
        Heavy-hex & $Dn$ & $9L_S=9Dn$ & $\frac{8}{5}nL_S=\frac{8}{5}Dn^2$ &  $8n/45$\\ \hline\hline
    \end{tabular}
    \caption{Swap layer and gate count needed to run a depth-one QAOA circuit.
    Here, we only report the leading terms and omit the big-$\mathcal{O}$ for brevity. For $D<1$ these numbers are lower bounds with $D = 1$ being the upper bound. 
    $L_S$ shows the lower bound of the number of swap layers as a function of decision variables $n$ and graph density $D$.
    The derivation of the total number of CNOT gates for a single QAOA circuit is discussed in Appendix~\ref{sec:circuit_depth_and_gate_count_details}.}\label{tab:depth_gate_count}
\end{table*}

\begin{figure}[tb]
\centering
\includegraphics[width=\columnwidth]{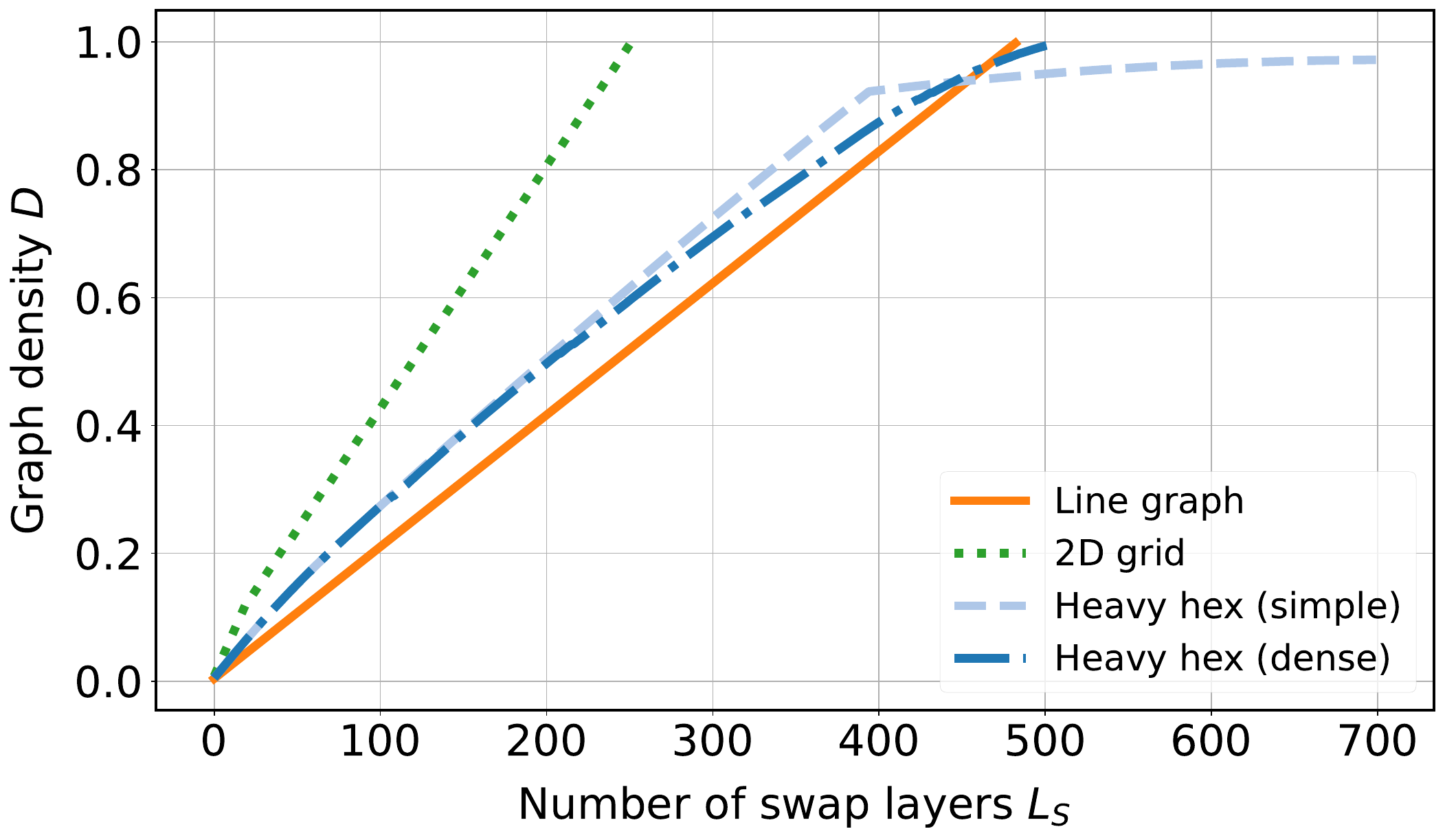}
\caption{Maximal graph density as a function of the number of swap layers $L_S$ for the line graph with 485 qubits (solid orange line), the 2D grid with $22 \times 22=484$ qubits (dotted green line), and the heavy-hex grid with 485 qubits (blue lines).
Here, we chose 484 and 485 qubits so that the different coupling maps have a similar qubit count.
The ''heavy-hex (simple)'' uses the line strategy on the longest line in the heavy-hex graph, and thus, does not reach full connectivity.
}\label{fig:graph_density_vs_swaps}
\end{figure}

We investigate how the gate count and depth of QAOA circuits scale with the swap strategies described in Sec.~\ref{sec:swap_strategies}.
QAOA circuits transpiled with the swap strategies of Sec.~\ref{sec:swap_strategies} consist of alternating layers $E_i$ and $S_i$, containing only $R_{ZZ}$ and SWAP gates, respectively.
The total number of CNOT layers $L_\text{cx}$ and gates is therefore directly related to the required number of swap layers $L_S$.
By definition, every swap layer $S_i$ consists of SWAP gates simultaneously executable on the underlying coupling map. On cross-resonance based hardware a SWAP gate is executed by three CNOT gates. 
The number of CNOT layers in $E_i$ depends on the coupling map, in particular its chromatic edge number, and the swap strategy.
The total number of CNOT gates as a function of $L_S$ is bounded from above by $(4\eta-1)nL_S/2$ for $\eta$-dimensional grids and $8 nL_S/5$ for heavy-hex coupling maps. 
These formulas, derived in Appendix~\ref{sec:circuit_depth_and_gate_count_details}, account for possible gate cancellations between subsequent layers $E_i$ and $S_i$.
We therefore need $\mathcal{O}(Dn^2)$ CNOT gates per application of $\exp(-i\gamma  H_C)$ with prefactors, 1.5, 1.75, 1.375 and 1.6 for linear, 2D-grid, 3D-grid, and heavy-hex coupling maps, respectively, see Tab.~\ref{tab:depth_gate_count}.
The two-dimensional grid strategy is thus worse than the line strategy for the same number of qubits while the three-dimensional grid is the best.
This results from two competing scalings. 
On one hand, the number of swap layers required to reach full connectivity scales as $1/2^{\eta-1}$ with $\eta$, i.e., higher-dimensional grids require less swap layers.
On the other hand, the number of CNOT gates per swap and $R_{ZZ}$ layer combined scales as $4\eta-1$, i.e., higher dimensional grids have less $R_{ZZ}$ and CNOT cancellations.
Therefore, the additional connectivity of the two-dimensional grid over the line is not useful for reaching full connectivity. On two-dimensional grids it is better to use a line swap strategy on the longest line in the grid instead of the gird swap strategy.
By contrast, three-dimensional grids, such as cold atomic lattices~\cite{Henriet2020}, lead to a smaller gate count than a line and a two-dimensional grid.

We now benchmark the swap strategies of Sec.~\ref{sec:swap_strategies} to t$|\text{Ket}\rangle$~\cite{Sivarajah2020}, 2QAN~\cite{Lao2021}, the QAOA Compiler~\cite{Alam2020}, SabreSwap~\cite{Li2019}, and a commutation aware version of SabreSwap that we implemented and describe in Appendix~\ref{sec:commute_sabre}.
We consider depth-one QAOA circuits of MaxCut~\cite{Barahona1988, Hastad2001}, formally introduced in Appendix~\ref{sec:maxcut}, for graphs chosen uniformly at random from the set of all graphs with $n$ nodes and  $\lceil D\frac{n(n-1)}{2} \rceil$ edges~\footnote{The graphs are generated with the \texttt{gnm\_random\_graph} method from the networkx package.}.
We compute the number of layers of simultaneous CNOT gates, and the CNOT gate count after each transpiler pass for heavy-hex coupling maps with a variable number of rows and columns, with $D \in \{0.25, 0.5, 0.75, 1\}$, details are in Appendix~\ref{sec:swap_strat_benchmark_appendix}.
Commutative aware SabreSwap results in fewer CNOT gates but deeper circuits for problems with $D=0.25$ and $0.5$.
2QAN has the least number of CNOT gates but results in deeper circuits than the swap strategies.
The swap strategies are clearly advantageous for dense graphs, see Fig.~\ref{fig:transpiler_comparison}.
This is expected since the swap strategies are tailored for dense problems and may perform unnecessary SWAP gates on qubits that do not need to be connected in incomplete graphs.
For hardware subject to finite $T_1$ and $T_2$-times shallow circuits are advantageous as idling qubits accumulate errors.
We find that the swap strategies result in the shallowest circuits as shown by the number of CNOT layers in Fig.~\ref{fig:transpiler_comparison}.
Additionally, the time taken to transpile with the swap strategies is significantly lower than the time needed by t$|\text{Ket}\rangle$, 2QAN, the QAOA Compiler, and commutative aware SabreSwap and is comparable to SabreSwap, see Appendix~\ref{sec:swap_strat_benchmark_appendix}.

\begin{figure*}[tb]
    \centering
    \includegraphics[width = 0.98\textwidth]{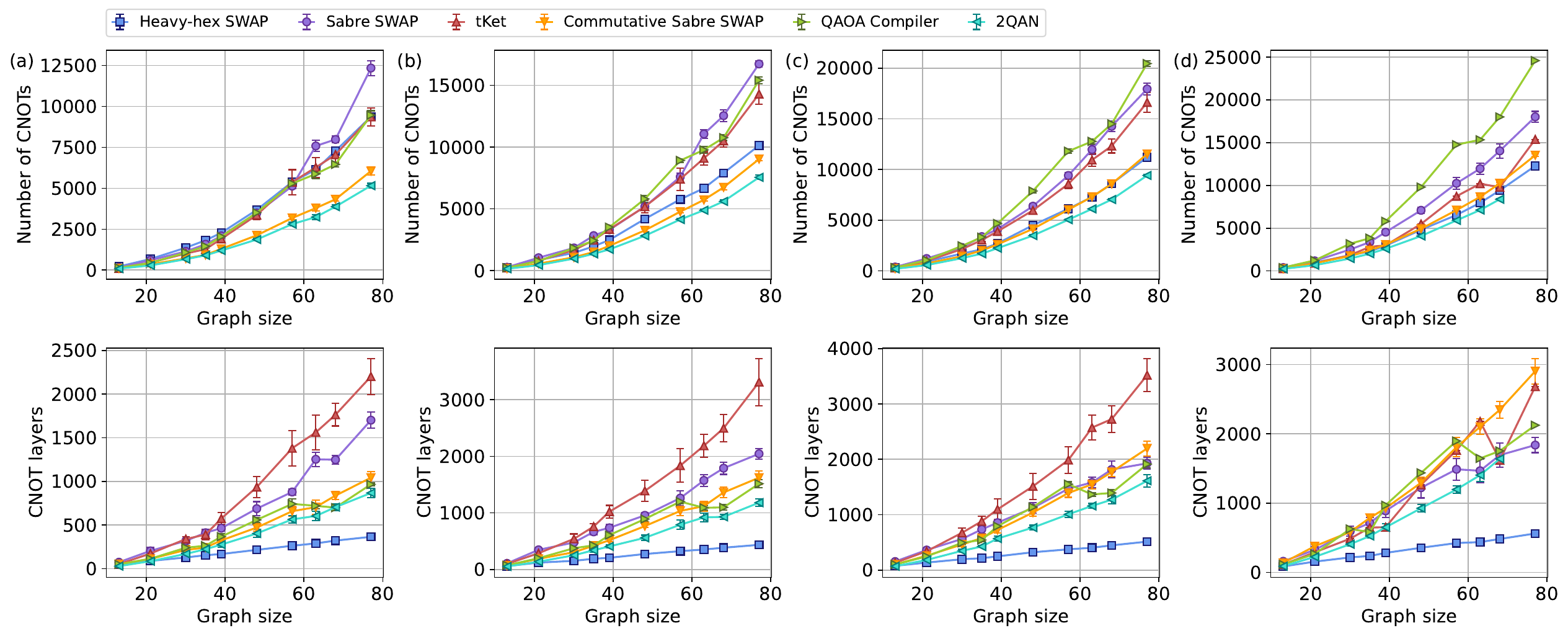}
    \caption{Swap strategies benchmarked on heavy-hex graphs defined by a number of rows and columns $(i, j)\in\{(1, 1), (1, 2), (1, 3), (2, 2), (1, 4), (1, 5), (1, 6), (2, 4), (3, 3), (2, 5)\}$.
    Subfigures (a), (b), (c), and (d) correspond to graph densities of 0.25, 0.5, 0.75 and 1.0, respectively.
    The top row shows the number of CNOTs and the bottom row shows the circuit depth measured as the number of CNOT layers.
    }
    \label{fig:transpiler_comparison}
\end{figure*}

\section{Hardware requirements\label{sec:hardware}}

We now discuss how gate fidelity and gate duration impact QAOA in Sec.~\ref{sec:fidelity} and Sec.~\ref{sec:runtime}, respectively.

\subsection{Gate fidelity\label{sec:fidelity}}

The error-prone unitary gates limit performance.
Entropic inequalities help bound the maximum circuit depth for QAOA~\cite{Aharonov1996, Franca2020}.
Following Proposition 2 of Ref.~\cite{Franca2020}, the maximum depth of a QAOA circuit with a fraction $f_1$ of single-qubit gate layers and a fraction $f_2$ of two-qubit gate layers with depolarizing noise with probability $p_1$ and $p_2$, respectively, is bounded by \begin{align}\label{eqn:depth_criteria}
    L_\text{max}\approx\frac{\ln\epsilon^{-1}}{2(f_1p_1+f_2p_2)}.
\end{align}
For circuits deeper than $L_\text{max}$ there exists a polynomial time classical algorithm that finds a Gibbs state that we can classically sample from with the same energy up to an error $\epsilon\vert\vert H_C\vert\vert$ of the noisy QAOA state.
Here, $\epsilon$ controls the precision with which we approximate the energy.
Reference~\cite{Franca2020} argues that $\epsilon$ should range from $10^{-1}$ to $10^{-2}$ since most optimization algorithms require a number of shots with an inverse polynomial scaling in  $\epsilon$~\cite{Wang2019, Harrow2021}. 
This implies that going beyond an $\epsilon\sim10^{-2}$ incurs a significant sampling cost.
Since the CNOT gate is the dominant source of error and QAOA circuits for denser problems are dominated by two-qubit gate layers, we further simplify Eq.~(\ref{eqn:depth_criteria}) to $L_\text{max}\approx \ln(\epsilon^{-1})/2p_2$.

The CNOT fidelity $\mathcal{F}_\text{cx}$ quoted by IBM Quantum systems is the probability of a depolarizing channel since it is measured with randomized benchmarking~\cite{Magesan2011, Magesan2012, Corcoles2013}.
Each CNOT gate layer in a QAOA circuit transpiled with the swap strategies presented in Sec.~\ref{sec:swap_strategies} will on average have $l_\text{cx}$ CNOT gates, see Tab.~\ref{tab:depth_gate_count}.
We make the simplifying assumption that the depolarizing probability of a layer of CNOT gates is $p_2=1-\mathcal{F}_\text{cx}^{l_\text{cx}}$.
This is an optimistic assumption since effects such as crosstalk may degrade the performance of gates applied in parallel~\cite{Erhard2019, Prakash2020}.
A QAOA with depth $p$ using $L_\text{cx}(n,D)$ CNOT layers per application of $\exp(-i\gamma  H_C)$ for a graph with $n$ nodes and density $D$ must satisfy
\begin{align}\label{eqn:fidelity_criterion}
    pL_\text{cx}(n, D)\leq \frac{\ln\epsilon^{-1}}{2\left(1-\mathcal{F}_\text{cx}^{l_\text{cx}}\right)},
\end{align}
otherwise there is a corresponding Gibbs state which can be sampled from classically in polynomial time~\cite{Franca2020}.
Following Ref.~\cite{Franca2020} there is therefore little chance that running a quantum circuit that requires more CNOT layers than the bound in Eq.~(\ref{eqn:fidelity_criterion}) will lead to a quantum advantage, i.e., we assume there is little hope for a quantum advantage if there exists a polynomial-time classical approximation.

We calculate the bound in Eq.~(\ref{eqn:fidelity_criterion}) for a heavy-hex coupling map with 485 qubits as a function of $D$ and $\mathcal{F}_\text{cx}$.
The density-dependent upper bound on the gate error is one to three orders of magnitude lower than current hardware capabilities, see Fig.~\ref{fig:fidelity_vs_graph_density}.
The data indicate that non-hardware native optimization problems will require gate fidelities above error correction thresholds which typically range from 99\% to 99.99\%~\cite{Aliferis2007, Fowler2012, Chamberland2020}.
When such fidelities are reached it may still be advantageous to run QAOA on noisy hardware due to the large qubit overhead imposed by error correction and the potentially lower execution times.

\begin{figure}[htbp!]
    \centering
    \includegraphics[width=\columnwidth]{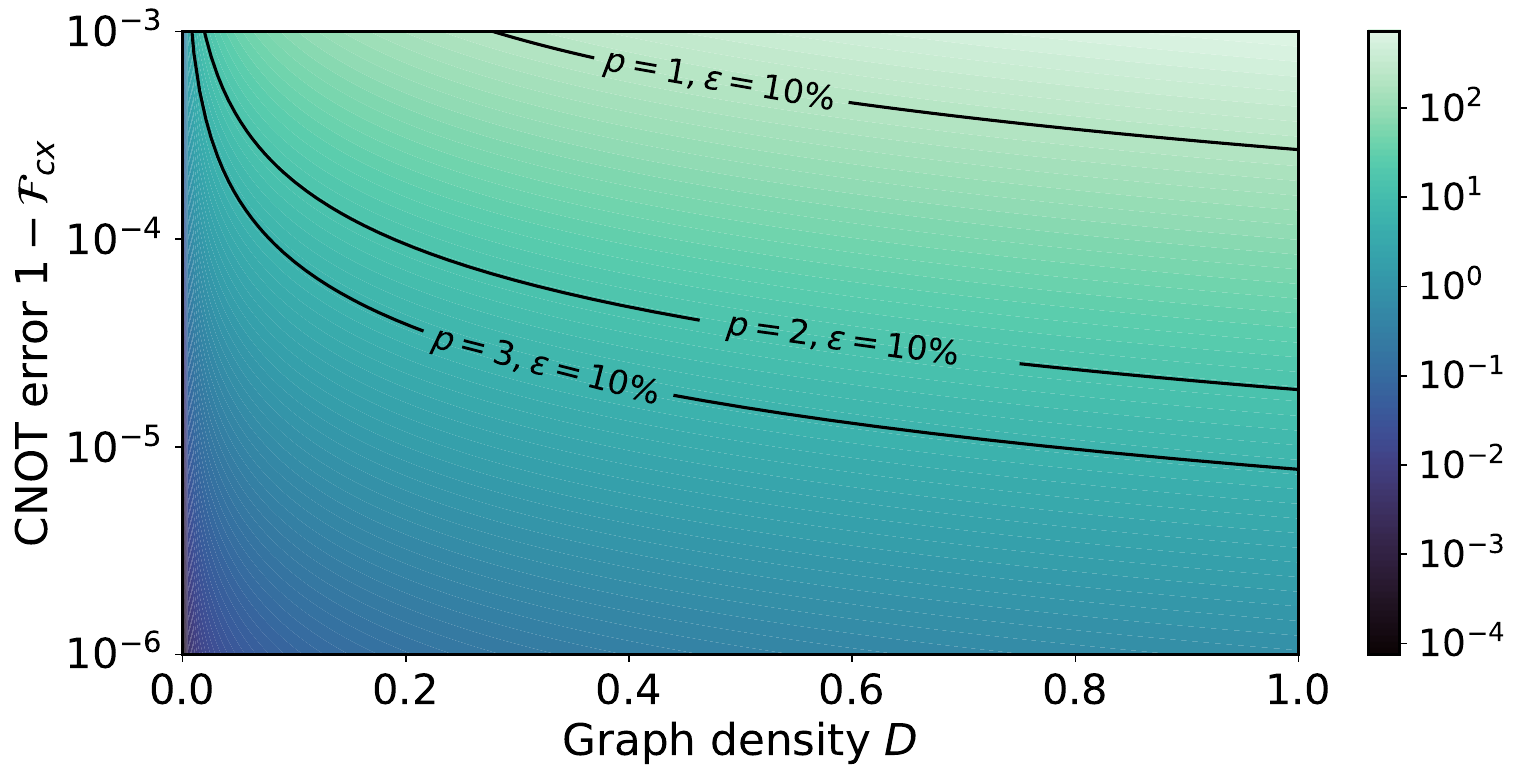}
    \caption{
    Gate depth criterion to implement QAOA on a heavy-hex graph. 
    The color scale shows $18nD(1-\mathcal{F}_\text{cx}^{l_\text{cx}})$ plotted on a logarithmic scale as a function of graph density $D$ and $1-\mathcal{F}_ \text{cx}$ with $n=485$ (as in Fig.~\ref{fig:graph_density_vs_swaps}) for the heavy-hex swap strategy which has an average of $l_\text{cx}=8n/45$ CNOT gates per CNOT layer, see Tab.~\ref{tab:depth_gate_count}.
    The contour lines indicate $\ln(\epsilon^{-1})/p$ with $\epsilon=10\%$ and different QAOA depths $p$.
    }
    \label{fig:fidelity_vs_graph_density}
\end{figure}

\subsection{Execution-time analysis\label{sec:runtime}}

We now estimate the execution-time of QAOA as the product of the number of iterations of the classical solver $N_\text{iter}$ times the time taken to gather the data at each iteration
\begin{equation}
    \tau_\text{QAOA}=N_\text{iter}\left( N_\text{shots}\cdot\tau_\text{shot}+\tau_\text{init}\right).
\end{equation}
Here, the number of measurements per iteration is $N_\text{shots}$ and the duration of a single-shot $\tau_\text{shot}=\tau_\text{circ}+\tau_\text{delay}$.
The time taken to run all the gates, measurement, and reset instruction is captured in $\tau_\text{circ}$ while $\tau_\text{delay}$ is a fix delay after each measurement used to improve the qubit reset~\cite{Wack2021, Egger2018}.
At each iteration the control hardware must be setup to gather the next shots and therefore incurs a time penalty $\tau_\text{init}$.
This decomposition is similar to the Circuit Layer Operations per Second (CLOPS) benchmark~\cite{Wack2021}.
We estimate the execution-time $\tau_\text{QAOA}$ for different problem sizes $n$ and on different coupling maps.
We focus on problems with $D=1$ since they upper bound the $D<1$ instances.
Substituting $D<1$ in the following equations may underestimate the execution time depending on the graph structure.

\begin{table}[]
    \centering
    \begin{tabular}{l r r}\hline\hline
                              & Current & Possible \\ \hline
        $\sqrt{X}$ error     & $7.2\cdot 10^{-4}$ & \\
        CNOT error           &  $169.0\cdot 10^{-4}$ & \\ 
        $\sqrt{X}$ duration     & $36~{\rm ns}$ & $4~{\rm ns}$~\cite{Werninghaus2020}\\
        CNOT duration        & $400~\rm{ns}$        & $30~{\rm ns}$~\cite{Kirchhoff2018} \\
        \hline\hline
    \end{tabular}
    \caption{
    Performance metrics of quantum hardware.
    Current performance corresponds to typical values on IBM Quantum systems.
    Possible performance shows the gate duration that has been measured or could be implemented. \label{tab:hardware_values}}
\end{table}

\subsubsection{Duration of a single-shot}

On cross-resonance based hardware~\cite{Rigetti2010, Sheldon2016} the duration of a single-shot is determined by the duration of the CNOT gate $\tau_\text{cx}$, the QAOA depth $p$, the problem density $D$, and the coupling map as discussed in Sec.~\ref{sec:swap_strategies}.
Since QAOA at sub-logarithmic depth is not expected to outperform classical solvers~\cite{Hastings2019, Bravyi2020, Farhi2020}, we assume that $p$ scales at least logarithmically with the number of variables $n$. 
We therefore chose $p=\log_2(n)$ for our runtime analysis.
With the number of CNOT layers $L_\text{cx}(n,D)$ we estimate that a single-shot lasts at least
\begin{align}
    \tau_\text{shot}\geq\log_2(n)L_\text{cx}(n,D)\tau_\text{cx}+\tau_\text{delay}.
\end{align}
Since $L_\text{cx}(n,D)$ scales as $\Omega(Dn)$ the duration of a single QAOA shot scales at least as fast as $\tau_\text{shot} = \Omega(Dn\log_2 n)$.
With a $400~{\rm ns}$ CNOT gate and a heavy-hex coupling map, i.e. $L_\text{cx}=9nD$, the duration of a shot is significant, see Fig.~\ref{fig:tau_qaoa} which only includes the CNOT gate time.
Here, measurement and reset instructions, which can last up to a few micro-seconds~\cite{Geerlings2013, Egger2018, Corcoles2021} and typically only appear once in a quantum circuit, are neglected.
With current cross-resonance gate durations of $200-400~{\rm ns}$, delays can also be neglected since $\tau_\text{delay}\approx250~{\rm \mu s}$ for current hardware~\cite{Wack2021} which is two orders of magnitude faster than the circuit duration.
The CNOT duration is thus currently the main driver of execution-time on noisy quantum hardware.
For example, the circuit of a complete interaction graph with $485$ variables has a single-shot duration of $14.9~{\rm ms}$ on a heavy-hex coupling map.
Optimal control schemes show that it is in principle possible to reduce the duration of the single- and two-qubit gates by an order of magnitude~\cite{Werninghaus2020, Kirchhoff2018}, see Tab.~\ref{tab:hardware_values}.
This reduces the QAOA run-time by an order of magnitude and will make the fixed delay $\tau_\text{delay}$ after each shot more relevant.

\begin{figure}[tb]
    \includegraphics[width=\columnwidth]{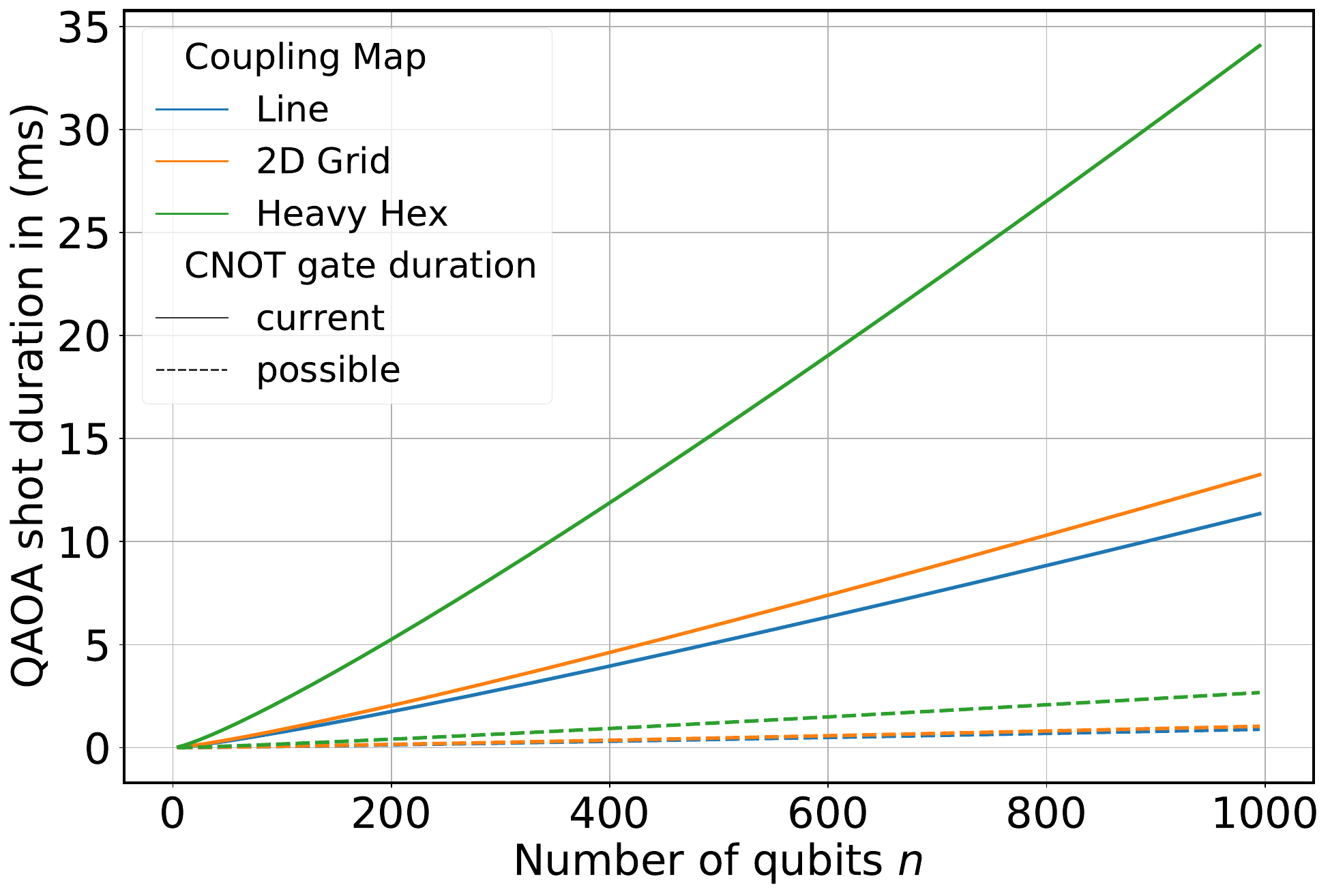}
    \caption{
    Scaling of the duration of a single QAOA circuit $\tau_\text{shot}$ as a function of problem size $n$ for complete graphs with $p=\log_2(n)$.
    The solid lines show current CNOT gate durations while the dashed lines correspond to $\tau_\text{cx}=30~{\rm ns}$.
    }
    \label{fig:tau_qaoa}
\end{figure}

\subsubsection{Number of shots required\label{sec:shots_per_iter}}

The classical optimizer has to minimize the objective function $E(\boldsymbol\theta)=\braket{\psi(\boldsymbol\theta)| H_C|\psi(\boldsymbol\theta)}$, where $\boldsymbol\theta=(\boldsymbol{\beta}, \boldsymbol{\gamma})$, which it can only stochastically access~\cite{Sweke2020}.
A simulation of a variational algorithm must therefore consider this effect. 
For example, QAOAs simulated in Qiskit~\cite{Qiskit} always perform better with the state-vector simulator, which does not have sampling noise, than the shot-based QASM simulator, see Fig.~\ref{fig:n_shots}.
A large number of shots thus helps QAOA converge~\cite{Zhou2020, Sankar2021} but increases its execution-time.
Zeroth-order methods optimize by directly estimating $E(\boldsymbol\theta)$~\cite{Harrow2021}. 
They prepare and measure 
each trial state $N_\text{shots}$ times.
Measurement $k$ randomly projects $\ket{\psi(\boldsymbol{\theta})}$ onto a basis-state with an energy $E_k$ thereby estimating $E(\boldsymbol\theta)$ by $\overline E=N_\text{shots}^{-1}\sum_k E_k$.
Reference~\cite{Harrow2021} shows that for 1-local Hamiltonians with $n$ qubits the total number of shots over the course of the optimization, i.e., $N_\text{iter}N_\text{shots}$, needed to reach a precision $\epsilon$ within the vicinity of the optima is lower bounded by $\Omega(n^3/\epsilon^2)$.
By contrast, first-order methods optimize by taking measurements that correspond to the gradient $\partial_{\boldsymbol{\theta}}E(\boldsymbol{\theta})$ and are sometimes referred to as analytical gradient measurements~\cite{Guerreschi2017, Romero2018}.
Furthermore, for 1-local Hamiltonians Ref.~\cite{Harrow2021} shows that the total number of shots required by first-order methods to reach an $\epsilon$ precision scale as $\Theta(n^2/\epsilon)$.
In this setting, first-order methods therefore converge faster but still require a large number of shots.
Recently, optimizers that scale the number of shots based on the magnitude of the gradient have been developed to reduce the shot cost~\cite{Kubler2020}.
The $\Theta(n^2/\epsilon)$ scaling is a significant shot cost.
However, since optimal QAOA parameters concentrate for similar problem instances~\cite{Brandao2018} this cost may be amortized when solving many problem instances originating from the same reasonable distribution.

\begin{figure}[tb]
 \includegraphics[width=\columnwidth]{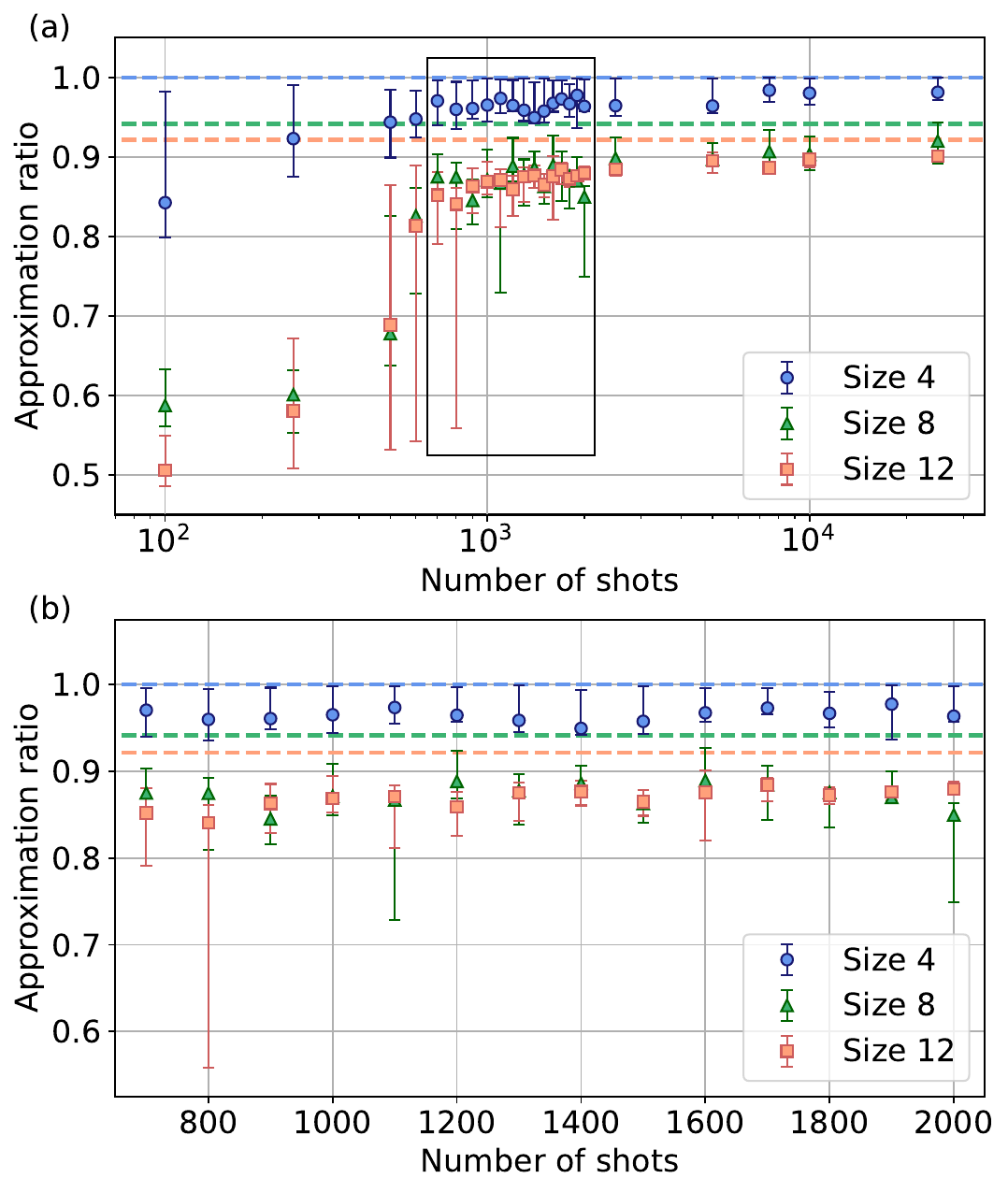}
 \caption{
 Depth 12 QAOA for Sherrington-Kirkpatrick graphs with 4, 8 and 12 nodes optimized with COBYLA.
 (a) Energy normalized to the ground-state energy, i.e., approximation ratio, after a depth 12 QAOA optimization. 
 The points show the median of ten different random instances as a function of the number of measured shots.
 The error bars show the 25\% and 75\% quantiles.
 The dashed lines show a state vector simulation which is equivalent to $N_\text{shots}\to\infty$ up to machine precision. 
 (b) Data points within framed region in (a).
 \label{fig:n_shots}
 }
\end{figure}

\begin{figure}[tb]
    \centering
    \includegraphics[width=\columnwidth]{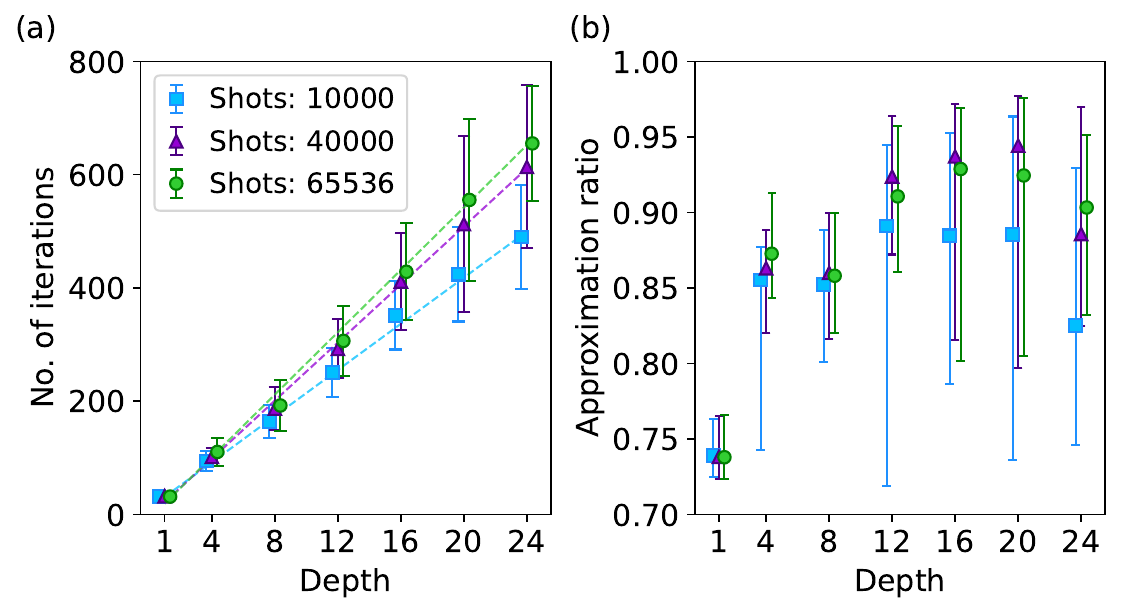}
    \caption{(a) Number of iterations completed by COBYLA.
    Each data point is the average number of iterations for 30 graphs with the standard deviation indicated as error bars. A small $p$ shift is added for visualization purposes.
    The number of iterations is approximately constant with $n$ (data not shown) and grows linearly with $p$, as shown by the linear fits (dashed lines) with slops of 20.4, 25.6, and 27.5 for $10^4$, $4\cdot 10^4$, and $2^{16}$ shots, respectively.
    (b) The approximation ratio increases with QAOA depth.
    A larger number of shots results in better approximation ratios and more COBYLA iterations.}
    \label{fig:cobyla_iter}
\end{figure}

\subsubsection{Number of iterations}

We further investigate the number of iterations of the classical optimizer empirically with COBYLA by running QAOA simulations on Qiskit's QASM simulator.
We set the number of maximum iterations to a high value (100,000) so that COBYLA terminates before reaching this threshold. 
The number of completed iterations is recorded for Sherrington-Kirkpatrick graphs with size $n$ from 5 to 10, with edge weights randomly chosen from $\{-1,1\}$ resulting in a total of 30 graphs, i.e., five per size.
We initialize $\boldsymbol{\beta}$ and $\boldsymbol{\gamma}$ for QAOA depths $p\in\{1,4,8,12,16,20,24\}$ with the Trotterized Quantum Annealing (TQA) protocol, i.e. a discretized annealing schedule, with a time-step of 0.75~\cite{Sack2021} as it performs better than random guesses.
We observe that the number of iterations grows linearly with $p$ for all simulated graphs, see Fig.~\ref{fig:cobyla_iter}.
At fixed $p$ we did not see a change due to the graph size.
The mean cut value shows a noticeable improvement when increasing the shots from $10^4$ to $4\cdot10^4$ and the number of iterations increases with the number of shots since the optimizer is able to resolve finer details in the optimization landscape.
Since we use $p=\Omega(\log_2 n)$ we therefore approximate $N_\text{iter}\approx 25\log_2 (n)$.
This estimate is obtained with noiseless simulations.
In practice experimental noise will make it harder to converge to a good solution~\cite{Quiroz2021} and advances in optimizers for variational quantum algorithms may help speed-up convergence~\cite{Gacon2021}.
Furthermore, if $N_\text{iter}N_\text{shots}$ scales as $\Theta(n^2/\epsilon)$, then a logarithmic scaling of $N_\text{iter}$ with $n$ implies that many shots are needed at each iteration further confirming that hardware initialization times can be neglected, i.e., $N_\text{shots}\tau_\text{shot}\gg \tau_\text{init}$.

\subsubsection{Total QAOA execution time}

We now combine the results from the previous three sections to estimate the execution time of QAOA as
\begin{align*}
    \tau_{QAOA}&\approx N_\text{iter}N_\text{shots} \cdot\log_2(n)L_\text{cx}(n,D)\tau_\text{cx}.
\end{align*}
Here, the linear dependence on $p$ of the single-shot duration gives a factor $\log_2 (n)$.
The required number of shots is the largest source of uncertainty in the estimation of the execution time. 
We require at least $N_\text{shots}(n)\geq 10^3$ for small problems in noiseless conditions.
If $N_\text{iter}N_\text{shots}$ scales as $\Theta(n^2/\epsilon)$, as suggested by 1-local Hamiltonians~\cite{Harrow2021}, the impact on the execution time will be significant even for problems with a few hundred of variables, see Fig.~\ref{fig:total_run_time}.
With $10^4$ shots we estimate that the execution time of a complete graph with 485 nodes on a heavy-hex processor is $9.7~{\rm hours}$, see Fig.~\ref{fig:total_run_time}.
With the same number of shots a sparse graph with $D=0.1$ would require at least one hour to execute.
Furthermore, these estimates show that decreasing the CNOT gate duration is crucial.

We have not taken into account the cost of error mitigation strategies.
For example, the expectation value can be extrapolated to the zero-noise level by measuring the energy at different noise levels~\cite{Kandala2018, Temme2017}.
This multiplies $N_\text{shots}\cdot\tau_\text{shot}+\tau_\text{init}$ by the number of noise-levels measured.
Less noisy energy evaluations may also reduce the number of iterations needed.

Care must be taken when comparing quantum-based optimizers to classical solvers~\cite{Dash2015} but with current quantum technology the estimated execution appears to be significant~\cite{Billionnet2007}.
However, a possible advantage of QAOA over classical optimizers could lie in quickly generating good yet sub-optimal solutions by foregoing the classical optimization algorithm and initializing $\boldsymbol{\beta}$ and $\boldsymbol{\gamma}$ from a known good schedule, such as an annealing schedule~\cite{Sack2021}.
Furthermore, as optimal QAOA parameters tend to concentrate, optimizing one problem may be sufficient for a family of similar problems \cite{Brandao2018, Akshay2021, Galda2021, Streif2020, Shaydulin2022}.
We account for this possibility by removing the factor $N_\text{iter}=25\log_2(n)$ in the execution time and use a fixed number of shots $N_\text{shots}=10^4$.
Under these assumptions candidate solutions for a graph with 500 nodes can be generated in under three minutes, see the dashed-dotted line in Fig.~\ref{fig:total_run_time}.

\begin{figure}[tb]
    \centering
    \includegraphics[width=\columnwidth]{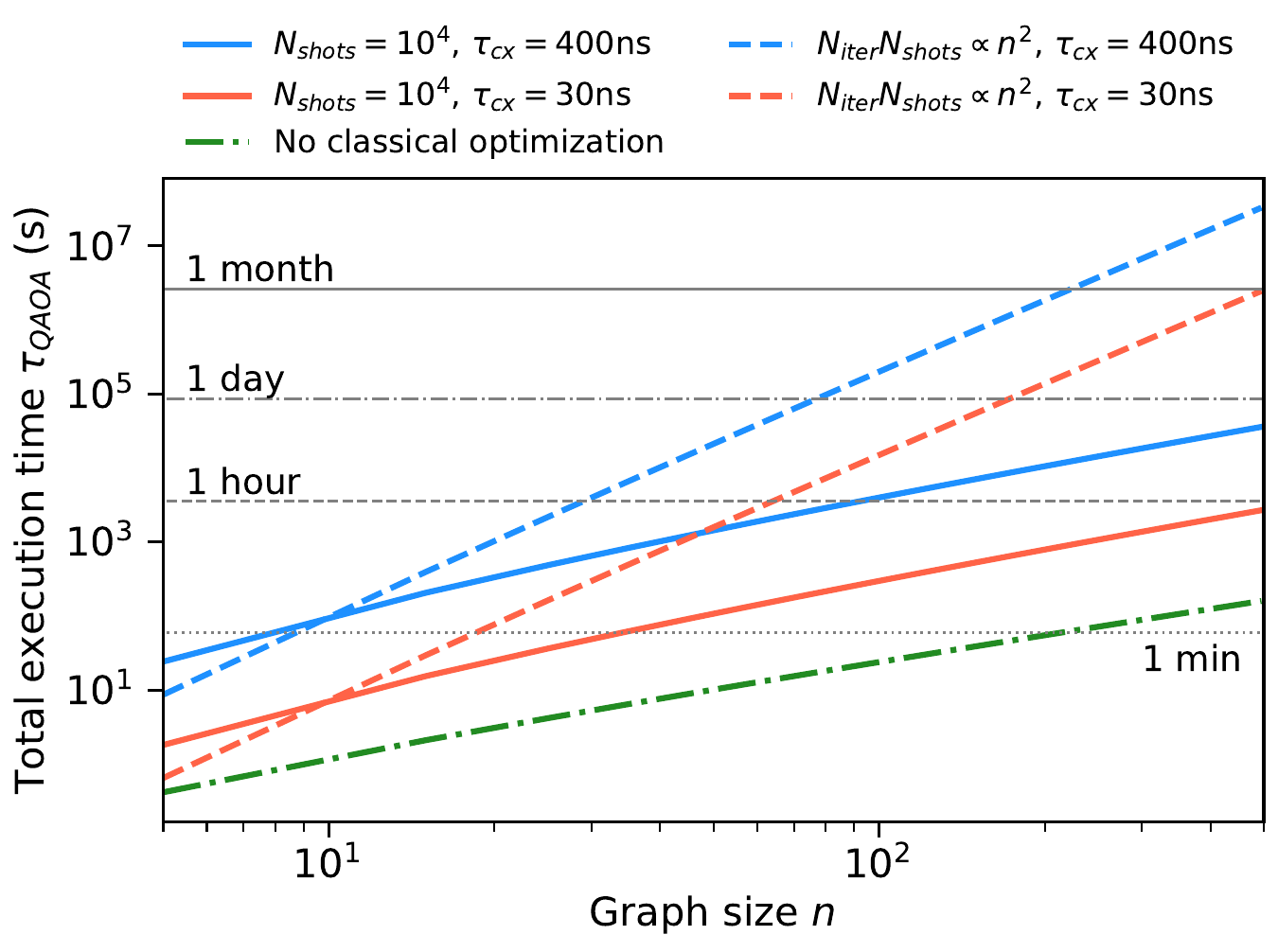}
    \caption{
    Total execution-time for complete graphs computed as $N_\text{iter} N_\text{shots}\tau_\text{shot}$ on a heavy-hex coupling map, i.e. $\tau_\text{shot}=L_{CX}(n)\log_2(n)\tau_\text{cx}$ with $L_{CX}(n)=9n$.
    The quadratic scaling (dashed lines) corresponds to $N_\text{iter}N_\text{shots}=25\cdot 10^4\log_2(10) (n/10)^2$ chosen such that a size $n=10$ graph requires a total of $25\log_2(10)\cdot 10^4$ shots following
     Fig.~\ref{fig:cobyla_iter}.
    Solid lines show a fixed number of shots per iteration, i.e., $N_\text{iter}N_\text{shots}=25\cdot10^4\log_2(n)$.
    The dashed-dotted line shows the total execution time without classical optimization, i.e. $N_\text{iter}=1$, with a fixed number of shots $N_\text{shots}=10^4$ and CNOT gates lasting $400~\rm{ns}$. This corresponds to a case where the classical optimization has been amortized over a large number of problem instances.
    }
    \label{fig:total_run_time}
\end{figure}

\begin{table*}
    \centering
    \begin{tabular}{l r r r r c r r r r} \hline\hline
        & \multicolumn{4}{c}{CNOT count} & $\quad$ & \multicolumn{4}{c}{Schedule duration ($\mu{\rm s}$)} \\
        & \multicolumn{4}{c}{depth $p$} & & \multicolumn{4}{c}{depth $p$}\\  
        $\mathcal{G}_{10}$ & 1 & 2 & 3 & 4 & & 1 & 2 & 3 & 4\\ \hline
        Unoptimized & $~$39 & $~$89 & $~$127 & $~$173 & & $~$6.6 & $~$14.6 & $~$21.2 & $~$28.6 \\
        Optimized & 24 & 48 & 72 & 96 & & 2.1 & 4.1 & 6.1 & 8.1 \\ 
        \hline\hline
    \end{tabular}
    \caption{CNOT gate count and schedule duration for $\mathcal{G}_{10}$.
    The unoptimized row corresponds to the default transpiler settings in Qiskit while the optimized row corresponds to the swap strategies followed by a pulse-efficient transpilation shown for $\gamma=\pi/4$.}
    \label{tab:circuit_metrics}
\end{table*}

\section{Qiskit Runtime hardware results \label{sec:runtime_program}}

The current execution model on cloud-based quantum computers sends a set of circuits as a job through the entire stack and queue.
Circuit transpilation and result analysis is done on the client side.
This is particularly inefficient for variational algorithms.
The Qiskit Runtime allows users to run an entire program in a containerized service close to the backend to avoid latencies between the user and the backend at each iteration.
This enables a significantly faster execution of variational algorithms like QAOA~\footnote{The QAOA Runtime program is publicly available through the IBM Quantum Services \href{https://quantum-computing.ibm.com/}{https://quantum-computing.ibm.com/}.}.

We first demonstrate the QAOA Runtime program with a seven-variable weighted maximum cut  optimization problem with a graph $\mathcal{G}_{10}$ with 10 unique Pauli $Z_iZ_j$ terms and depths $p\in\{2,3,4\}$.
Each edge $(i,j)$ has a weight $\omega_{i,j}$ of $-1$ or $1$ with a $50\%$ probability, see Appendix~\ref{sec:graph_details}.
$\mathcal{G}_{10}$ was constructed such that it can be implemented with one swap layer on the seven-qubit \emph{ibm\_nairobi} system.
Since the energy $E$ is related to the cut value $C$ by $E=-2C+\sum_{(i,j)}\omega_{ij}$ we minimize the energy to maximize the cut.
For $\mathcal{G}_{10}$ we have $E=-2C$.
We run QAOA with SPSA due to the noisy environment~\cite{Spall1998, Spall1997} and measure $2^{14}$ shots at each energy evaluation.
The 0.005 learning rate and 0.01 perturbation of SPSA were chosen by calibrating it on a depth-one landscape.
The initial $\boldsymbol{\gamma}$ and $\boldsymbol{\beta}$ values come from TQA initialization~\cite{Sack2021}.
First, we run $\mathcal{G}_{10}$ three times with $p=2$ with the default transpiler setting of Qiskit which does not use swap strategies.
Here, we do not observe any convergence, see the red data in Fig.~\ref{fig:seven_qubit_nairobi}.
We attribute this to the 0.83\% average CNOT gate error of \emph{ibm\_nairobi} and the deep CNOT gate count of the circuits.
Indeed, the circuits have 89, 127, and 173 CNOT gates for depth $p=2,3$, and $4$, respectively.

\begin{figure}[tb!]
    \centering
    \includegraphics[width=\columnwidth]{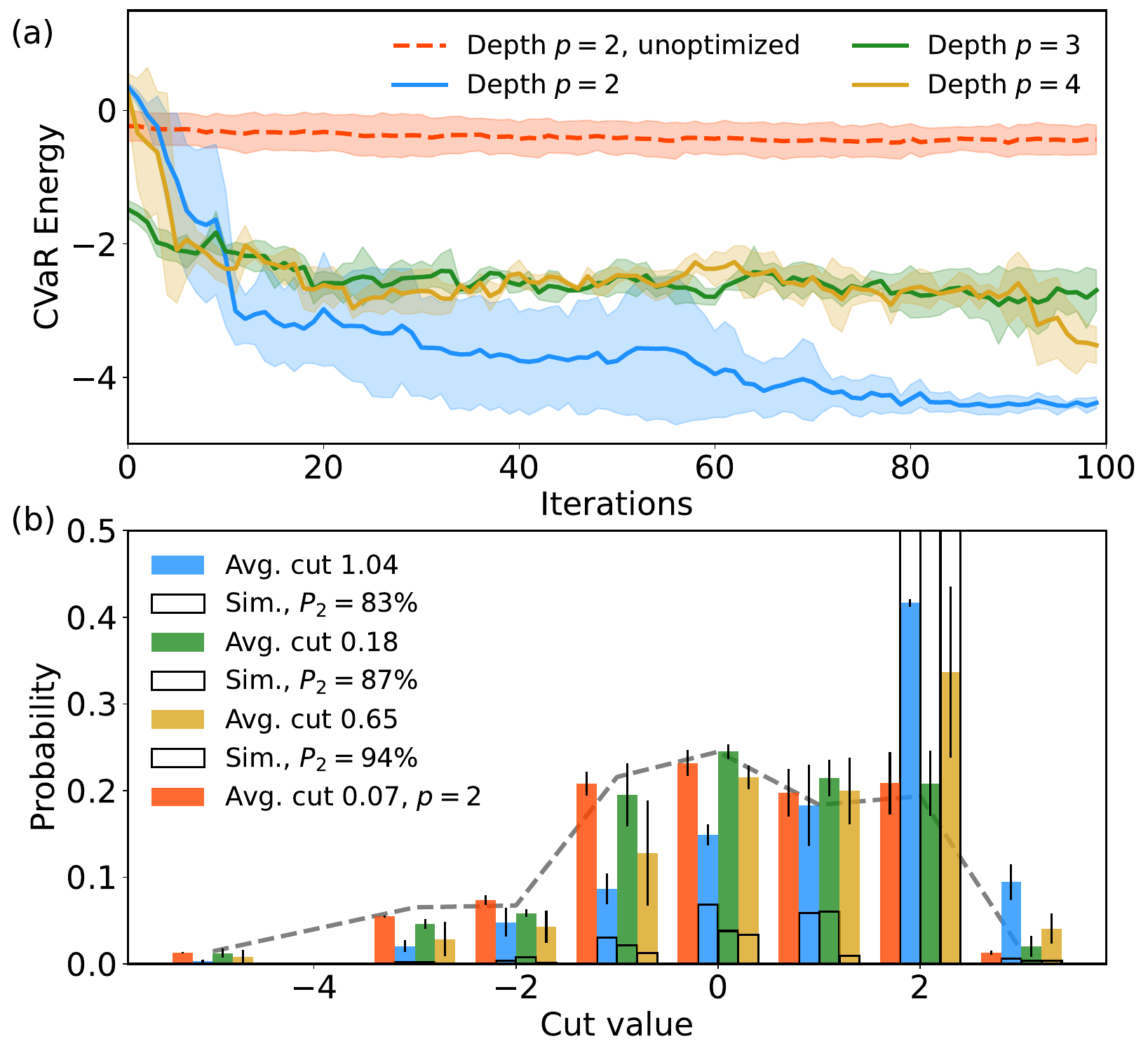}
    \caption{
    Execution of a seven-qubit QAOA on \emph{ibm\_nairobi}.
    (a) The average CVaR energy of SPSA's two evaluations per iteration averaged over three independent optimizations with TQA initial points.
    The shaded area is the standard deviation of the three runs.
    The dashed red line shows a depth-two QAOA with unoptimized transpiler settings.
    Solid lines show QAOA transpiled with swap strategies, pulse-efficient, and executed with CVaR with $\alpha=0.5$.
    (b) Cut distribution of the best measured point averaged over the three optimizations.
    Vertical black lines show the standard deviation.
    Here, the dashed gray line indicates the distribution obtained by uniformly sampling integers from $\{0, \ldots, 2^n-1\}$ and converting them to a bit string representing a cut.
    The black empty bars show the histogram of a noiseless QASM simulation with $2^{14}$ shots.
    The legend contains the simulated probability $P_2$ of a cut with value two.
    The execution time of each job is on average 1 hour and 5 minutes.}
    \label{fig:seven_qubit_nairobi}
\end{figure}

To improve convergence we rerun the QAOA Runtime program with optimized settings.
First, we use the swap strategies discussed previously.
Second, after the parameters are bound we run a pulse-efficient transpiler pass~\cite{Earnest2021} at each QAOA iteration to remove any unnecessary single-qubit gates and to minimize the cross-resonance gate usage.
The gains from these transpiler passes are summarized in Tab.~\ref{tab:circuit_metrics}.
Third, we employ CVaR optimization with an $\alpha$ of 0.5, i.e. we retain only the best 50\% of the measured shots at each iteration~\cite{Barkoutsos2020}.
Finally, we use readout error mitigation to reduce measurement errors~\cite{Bravyi2020b, Barron2020a}.
We observe a significant reduction in energy for $\mathcal{G}_{10}$ for depth $p=2$ as function of the iteration number.
Depths 3 and 4 would require 72 and 96 CNOT gates, respectively, without pulse-efficient transpilation and are most likely noise limited.
Crucially, the jobs that did see convergence manage to increase the probability of sampling a good cut when compared to random sampling, compare the histograms in Fig.~\ref{fig:seven_qubit_nairobi} with the dashed grey line.
Interestingly, the hardware has a higher probability of sampling the maximum cut which may be due to noise such as $T_1$-induced errors.
Crucially, each run of the variational algorithm required only one hour on the cloud-based quantum computer.
In addition, we evaluate the criterion in Eq.~(\ref{eqn:fidelity_criterion}).
A single layer of $ H_C$ has eleven and two layers of two and one CNOT gates, respectively.
We therefore approximate this as 12 layers with two gates.
The average gate fidelity on \emph{ibm\_nairobi} is $98.88\%$ which results in a maximum gate bound of 52 and 103 layers for $\epsilon$ values of $10^{-1}$ and $10^{-2}$, respectively.
Equivalently, for the depth-four QAOA, which has 48 layers, there is a Gibbs state which can be sampled from classically in polynomial time which approximates the energy to within an $\epsilon$ of $0.115$, see Sec.~\ref{sec:fidelity}.

\begin{figure}[tb!]
\includegraphics[width=\columnwidth]{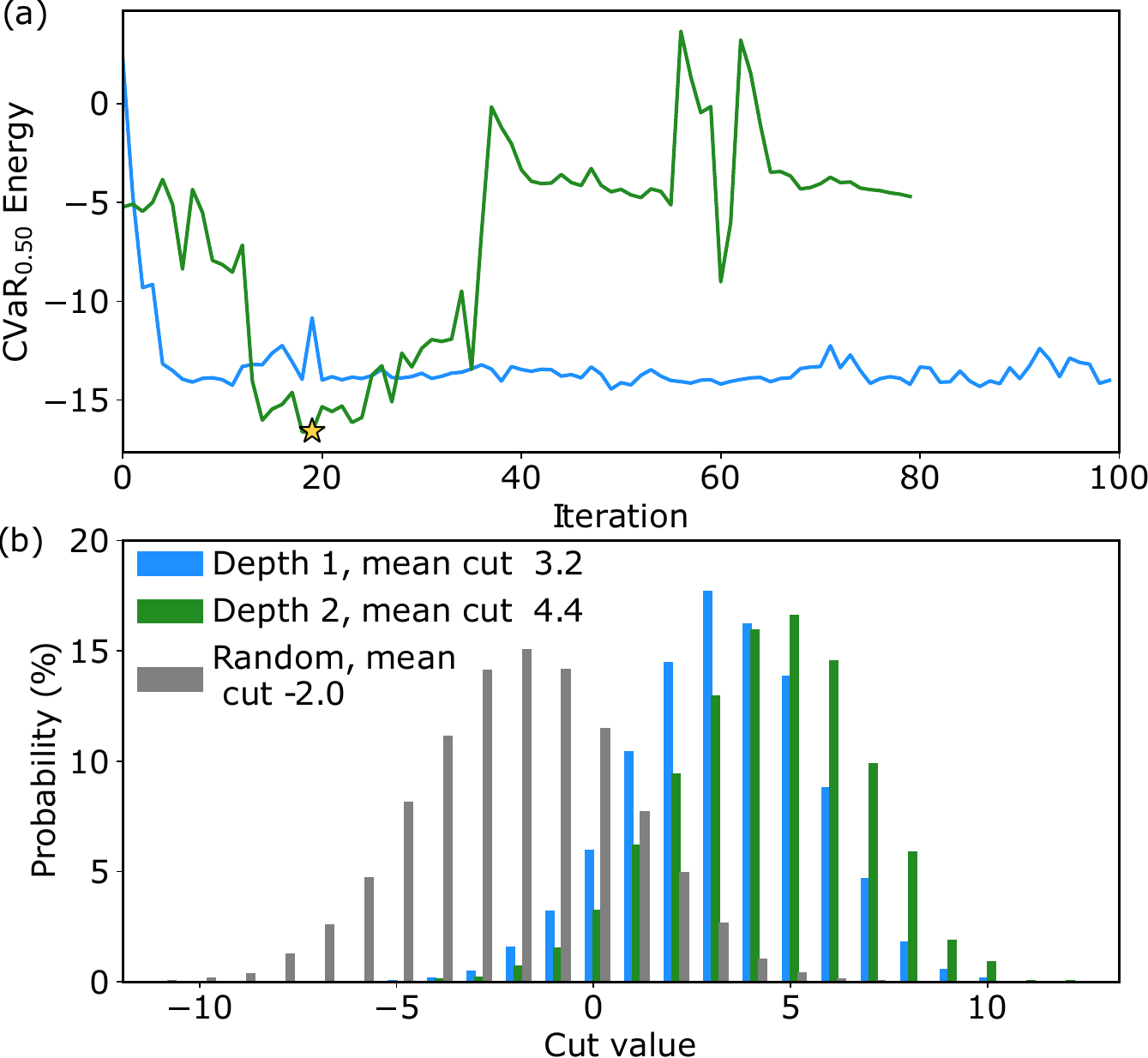}
\caption{
QAOA on a weighted maximum cut problem with a graph native to the hardware of \emph{ibmq\_mumbai} with a maximum cut of 12.
At each energy evaluation 16384 shots are gathered.
(a) Energy as function of the iteration number for depth-one and two QAOA initialized at $\gamma=1$ and $\beta=0.5$ and $\boldsymbol{\gamma}=(0.375, 0.75)$ and $\boldsymbol{\beta}=(0.375, 0)$, respectively.
The star with a ${\rm CVaR}_{0.5}$ energy of $-16.57\pm 0.16$ marks the depth-two parameters $\boldsymbol{\gamma}=(1.1245, 0.343)$ and $\boldsymbol{\beta}=(1.317, 0.548)$ that resulted in the green distribution of cuts.
(b) Probability distribution of the best measured cuts for depth-one (blue), depth-two (green), and random sampling (gray).
The energy $E$ is related to the cut value $C$ by $E=-2C-4$.
\label{fig:mumbai}
}
\end{figure}

\begin{table*}
\centering{
\small
\begin{tabular}{l r r r r r r}\hline\hline
Cut         & & \multicolumn{2}{c}{QAOA} & \multicolumn{3}{c}{Single-shot probability} \\
size & Graph & $p=1$ & $p=2$ & $\quad$Uniform & $p=1$ & $p=2$\\ \hline
10 & 212 & 25 & 148 & $1.6\cdot 10^{-6}$ & $\;\;1.5\cdot 10^{-3}$ & $\;\;9.0\cdot 10^{-3}$ \\
11 &  12 &  3 &   6 & $8.9\cdot 10^{-8}$ & $1.8\cdot 10^{-4}$ & $3.7\cdot 10^{-4}$ \\
12 &   2 &  1 &   7 & $1.5\cdot 10^{-8}$ & $6.1\cdot 10^{-5}$ & $4.3\cdot 10^{-4}$\\\hline\hline
\end{tabular}
\caption{
Number of cuts with size $\geq 10$.
Graph: number of cuts of each size in the Graph.
QAOA: number of measured cuts out of the $2^{14}$ shots shown in Fig.~\ref{fig:mumbai}(b).
The last three columns indicate the probability of a single-shot producing a cut of the given size with uniform sampling and the QAOA distributions of Fig.~\ref{fig:mumbai}(b).
\label{tab:sampled_cuts}}
}
\end{table*}

We also investigate QAOA on the 27 qubit device \emph{ibmq\_mumbai}.
Here, we only use a hardware native graph with random edge weights chosen from $\{-1, 1\}$ since such graphs already require 56 CNOT gates per QAOA layer and \emph{ibmq\_mumbai} has an average CNOT gate error of $0.66\pm  0.14\%$.
A hardware native graph requires six CNOT layers with on average $28/3$ CNOT gates per layer. 
Following Eq.~(\ref{eqn:fidelity_criterion}) the maximum number of layers with $\epsilon=10^{-1}$ is thus 13 when cross-talk effects from applying gates in parallel are neglected.
Here, we set the learning rate and perturbation of SPSA to 0.005 and 0.01, respectively.
Depth-one QAOA (which can be classically simulated efficiently) distinguishes noise from signal and we observe an improved average cut value compared to random sampling.
Depth-two QAOA shows signs of convergence but is impacted by noise which causes large energy jumps.
However, it still produces a lower energy state than depth-one QAOA.
Details are given in Appendix~\ref{sec:graph_details}.

Evaluating the cut value of each of the $2^{27}$ possible solutions is still numerically feasible.
There are 212, 12, and 2 solutions with a cut value of ten, eleven and twelve (the max-cut), respectively.
Therefore, the probability of measuring a cut with a value of ten or more by random sampling is $1.6\cdot10^{-6}$.
Out of the $2^{14}$ cuts sampled from the optimal depth-two state 7 cuts had a maximum value which corresponds to a probability of $0.043\%$, see Tab.~\ref{tab:sampled_cuts}. 
Here, a cut with a value of ten corresponds to an approximation ratio of $83\%$ which is close to, but lower than, the Goemans-Williamson approximation ratio of $\sim 88\%$~\cite{Goemans1995}.
According to the criterion in Eq.~(\ref{eqn:fidelity_criterion}) with $\epsilon=10^{-1}$ depth-two is just within reach.

\section{Discussion and Conclusion\label{sec:discussion}}

We have investigated swap strategies to implement dense circuits built from commuting two-qubit gates on linear, grid, and heavy-hex coupling maps.
For QAOA higher connectivity is not always synonymous with lower gate count due to simplifications between $R_{ZZ}$ and SWAP gates.
Crucially, a line swap strategy is better on a two-dimensional grid than the grid strategy we put forward.
However, our swap strategies on grid coupling maps with dimension three or higher reduce the circuit depth compared to a linear strategy.
Furthermore, the heavy-hex coupling map is almost identical to a linear coupling map.
We note that these strategies may not be optimal and that better strategies may exist, especially for low-density problems.
Crucially, the ability to move logical variables through the physical coupling map means that digital quantum computers do not incur an embedding overhead in the number of qubits as do quantum annealers~\cite{Konz2021}.

The fidelity estimates in Sec.~\ref{sec:fidelity} show that dense optimization problems will almost certainly require gates with an error rate far below fault-tolerance thresholds.
Furthermore, evaluating the depth criterion in Eq.~(\ref{eqn:fidelity_criterion}) with gate fidelities measured in isolation may yield a depth bound that is too optimistic.
This also shows a need for application tailored hardware benchmarks~\cite{Hashim2022, Santra2022}.

In Sec.~\ref{sec:runtime} we provided a methodology to estimate the execution time of QAOA on noisy hardware which we found to be significant.
We caution the reader that these numbers contain a large amount of uncertainty and can be impacted by noise~\cite{Lotshaw2022}. 
This methodology can guide the development of the control hardware.
For example, for the large number of shots ($>10^3$) that QAOA needs the initialization time of the classical control hardware~\cite{Wack2021} is negligible if kept below a second per iteration since the duration of a single shot is likely larger than $1~{\rm ms}$ for problems of practical interest, see Fig.~\ref{fig:tau_qaoa}.
Nevertheless, the duration of the two-qubit gate and the number of shots needed to estimate the energy significantly impact the execution time.
Pulse-efficient gate implementations may help reduce the execution time of QAOA circuits by leveraging the $R_{ZX}$ gate instead of the CNOT gate~\cite{Alexander2020, Stenger2021, Earnest2021}.
The execution time estimation also depends on the variant of QAOA.
For example, counteradiabatic driving reduces $p$ by adding counteradiabatic gates and an extra parameter at each layer~\cite{Chandarana2021, Wurtz2021} while Recursive-QAOA~\cite{Bravyi2020} will also produce different execution time estimates.
However, the extra gates and different number of parameters to optimize may impact the execution time as well.
By contrast, fault-tolerant architectures require a different execution time estimation methodology, as discussed in Ref.~\cite{Sanders2020} which also found that faster error correcting codes are needed to make heuristics for combinatorial optimization competitive.
This also suggests that algorithmic improvements to QAOA~\cite{Chandarana2021, Amaro2022} such as warm-start methods~\cite{Egger2021, Tate2020, Tate2021} will be required to get a quantum advantage in combinatorial optimization with heuristic algorithms.

We demonstrated a Qiskit Runtime program for QAOA on a cloud-based quantum computer.
Using QAOA-tailored transpiler methods we significantly reduced the gate count and duration of the underlying schedules.
This resulted in cut distributions biased towards high-value cuts when compared to random sampling.
Here, we caution that Goemans-Williamson randomized rounding and related procedures are a more meaningful benchmark for large problems~\cite{Goemans1995}.
Ultimately, our results are limited by the fidelity of the cross-resonance gate and decoherence.
CVaR aggregation made it possible to observe convergence at depth-two with 27 qubits.
Deeper QAOA circuits may yet be possible, often at the expense of more shots, using advanced error mitigation methods~\cite{Vazquez2022} such as Pauli-Twirling~\cite{Kim2021}, Probabilistic Error Cancelation~\cite{Berg2022}, M3 readout error mitigation~\cite{Nation2021}, and Zero-Noise Extrapolation~\cite{Temme2017, Kandala2018}.

\section{Acknowledgements}

The authors acknowledge Sergey Bravyi, Giacomo Nannicini, and Libor Caha for useful discussions.
This work was also supported by the Hartree National Centre for Digital Innovation program, funded by the Department for Business, Energy and Industrial Strategy, United Kingdom.
We acknowledge the use of IBM Quantum services for this work.
Code availability: The swap strategies have been implemented in Qiskit as transpiler passes for blocks of commuting two-qubit gates~\footnote{The following pull requests contain the swap strategies as implemented in Qiskit: \href{https://github.com/Qiskit/qiskit-terra/pull/7813}{PR \#7813}, and \href{https://github.com/Qiskit/qiskit-terra/pull/7979}{PR \#7979}.}.
The views expressed are those of the authors, and do not reflect the official policy or position of IBM or the IBM Quantum team.

IBM, the IBM logo, and ibm.com are trademarks of International Business Machines Corp., registered in many jurisdictions worldwide.
Other product and service names might be trademarks of IBM or other companies.
The current list of IBM trademarks is available at https: //www.ibm.com/legal/copytrade.


\bibliographystyle{unsrtnat}
\bibliography{references}


\appendix

\section{SWAP Strategy Details}\label{sec:swap_strategy_details}
Here, we discuss details of the swap strategies summarized in Sec~\ref{sec:swap_strategies} of the main text.
$G_0=(V,E_0)$ is the graph of a coupling map and $N(e)$ the set of edges adjacent to edge $e$.
Then, any swap layer $S_i$ executable on the coupling map is defined by a subset of edges $S_i \subseteq E_0$ that satisfy 
$e \not\in N(e')$ $\forall\, e, e' \in S_i$.
A swap strategy on $G_0$ is then a series of swap layers $\{S_i\}$ where $i\in\mathbb{N}$.
We now discuss swap strategies for the line, grid and heavy-hex coupling maps.

\subsection{Line}

We consider the line graph of size $n$ with vertices numbered from $0$ to $n-1$ and the swap strategy shown in  Fig.~\ref{fig:line_strategy} of the main text.

\begin{lemma}\label{lem:swap_strategy_line}
For the line graph of size $n$ the swap strategy that alternates between two swap layers, one on all odd numbered edges, and one on all even numbered edges reaches full connectivity in $n-2$ layers and is optimal.
\end{lemma}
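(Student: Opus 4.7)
The plan is to decompose the lemma into an upper-bound claim --- the alternating strategy reaches full connectivity in at most $n-2$ layers --- and a matching lower bound that no swap strategy on the line of $n$ qubits can do better. The two parts require rather different techniques and I would attack them separately.

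For the upper bound I would set up an explicit trajectory description for the alternating strategy. Unpacking the definitions of $S_0$ and $S_1$ by induction on the layer index, one verifies that each logical qubit evolves as a unit-speed particle on $\{0, 1, \dots, n-1\}$ that reflects at both endpoints with a one-layer pause: a qubit at an even initial position moves rightward by one step per layer until reaching $n-1$, pauses for one layer, then moves leftward; a qubit at an odd initial position does the mirror thing with a bounce at $0$. This dynamics is equivalent to uniform motion by one step per layer on a discrete cycle of length $2n$ under a triangle-wave folding $\phi : \{0, \dots, 2n-1\} \to \{0, \dots, n-1\}$. Because the difference of cycle positions between any two qubits is invariant, the sequence of their physical distances is an explicit periodic function of the layer index, and a short case analysis on whether the two qubits lie in the same or opposite halves of $\phi$ shows that every pair attains physical distance one at some layer $k \in \{0, 1, \dots, n-2\}$, with the bound saturated by pairs whose cycle images straddle the fold boundary.

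For the lower bound I would use a mixed counting and structural argument. A natural starting point is to track, for each qubit separately, the number of distinct other qubits it has been adjacent to across all layers so far; focussing on a boundary qubit (initially at position $0$ with only one neighbour) makes this counter start at $1$ and end at $n-1$. A per-layer analysis of how this counter can grow, combined with the observation that the pair formed by a swap was already an adjacency in the immediately preceding layer (so the partner cannot contribute a new pair in the layer the swap occurs), should let me bound the per-qubit growth per layer and force $L \ge n-2$. The global counting bound from maximum changed adjacencies per layer is by itself too weak, and must be supplemented by this per-qubit tracking.

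The main technical obstacle is the lower bound. The distance-based argument that a fixed pair's separation shrinks by at most two per layer gives only $\lceil (n-2)/2 \rceil$, and the maximum-changed-adjacencies-per-layer count, which is at most $\lfloor n/2 \rfloor$ for matchings on $P_n$, gives only roughly $n-3$; both are strictly weaker than $n-2$. Closing the gap requires exploiting the structural constraint that a swap preserves the pair it acts on, which limits how many genuinely new pairs a single layer can produce for any specific qubit --- especially a boundary qubit whose set of possible neighbours grows slowly. I expect the careful, layer-by-layer bookkeeping under arbitrary (not only alternating) swap strategies to be the crux of the proof; the upper bound, by contrast, should follow mechanically once the cycle-unfolding picture is set up.
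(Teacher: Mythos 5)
Your upper bound is essentially the paper's own argument: the paper writes the explicit trajectory $p_k(q_i)=\min(i+k,\,2n-i-1-k)$ for even $i$ (and the mirrored $\max$ for odd $i$), which is precisely your triangle-wave folding of uniform motion on a cycle of length $2n$, and then checks at $k=n-2$ that every pair has either become adjacent or exchanged left-to-right order (an order exchange on a line forcing adjacency at some intermediate layer). That half is sound and matches the paper.

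The lower bound as you describe it has a genuine gap. Your mechanism --- the swap partner was already an adjacency, so it contributes no new pair --- only controls layers in which the tracked qubit actually moves. In a layer where it sits still at an interior position $j$, the swaps on edges $(j-2,j-1)$ and $(j+1,j+2)$ may both fire, replacing both of its neighbours at once; your bookkeeping then permits two genuinely new neighbours in that layer, and the bound degrades to $1+2L\ge n-1$, i.e.\ $L\ge (n-2)/2$, no better than the distance argument you already rejected. The missing ingredient is an order-theoretic fact specific to the line: two qubits can exchange their left-to-right order only by directly swapping with each other (disjoint adjacent transpositions move each qubit by at most one position, so a reversal forces $b=a+1$ and the transposition $(a,a+1)$ to be applied). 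Hence, for the qubit $q_0$ that starts at the leftmost node, every qubit currently positioned to its left has already swapped with, and therefore been adjacent to, $q_0$. Its left neighbour is consequently never new; only the right edge can supply a new neighbour, so the counter grows by at most one per layer whether or not $q_0$ moves, giving $1+L\ge n-1$ and $L\ge n-2$. This is exactly the paper's optimality argument, and without it your per-qubit tracking does not close the factor-of-two gap you correctly identified as the crux.
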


\begin{proof}
Let $q_i$ denote the $i$-th qubit in the line.
Following Lemma~\ref{lem:swap_strategy_line}, each qubit moves continuously in one direction until reaching a line end where it reverses direction.
Further, starting with the swap layer applied to the even numbered edges, the odd and even numbered qubits begin by moving left and right, respectively. 
The position of qubit $q_i$ starting at node $i$ after applying $k\leq n$ swap layers is
\begin{displaymath}
p_k(q_i) = \begin{cases} \min( i + k, 2n-i-1-k ) \quad &\text{ if } i \text { even }  \\ \max( i - k, -i + k -1 )  &\text{ if } i \text { odd } \end{cases}
\end{displaymath}
for $k \in \{0,\dots,n\}$.
Setting $k = n$ in the equation above, it holds that $p_n(q_i) = n-1-i$ for all $i \in {0, \dots, n-1}$.
Then, after $n$ steps, the line is fully reversed implying that any two qubits were positioned next to each other at some point during the process.
This shows that the strategy reaches full connectivity after $n$ steps.

It turns out, that we already reach full connectivity after $n-2$ steps.
To prove this, note that after $n-2$ steps the position of $q_i$ is
\begin{align}\label{eqn:line_position}
p_{n-2}(q_i) = 
\begin{cases} 
n-2 \quad &\text{ if } i = 0  \\ 
n - i +1   &\text{ if } i \text { even }, i > 0\\
n - i - 3   &\text{ if } i \text { odd }, i < n-2\\
i - n +2   &\text{ if } i \text { odd }, i \geq n-2\\
\end{cases}
\end{align}
First, consider qubit $q_0$, i.e. case $i=0$ in Eq.~(\ref{eqn:line_position}), which starts in the leftmost position.
Its final position is the second rightmost node in the line.
Therefore, during the process $q_0$ passes all nodes but the rightmost one.
It follows that $q_0$ must have been positioned next to every other qubit at some point of the swap process.
The qubit initially at $i\geq n-2$ with odd $i$, i.e. the fourth case in Eq.~(\ref{eqn:line_position}), arrives in position 0 or 1 after $n-2$ SWAP layers, i.e. $p_{n-2}(q_i) \in \{0,1\}$.
It was therefore positioned next to every other qubit at some point of the SWAP strategy.
Now consider the second and third cases in Eq.~(\ref{eqn:line_position}).
When $i_1, i_2 \in {1, \dots, n-1}$, such that qubit $q_{i_1}$ is to the left of qubit $q_{i_2}$, i.e. $i_1 < i_2$, and neither $i_1$ nor $i_2$ are larger than or equal to $n-2$ and odd, then, after $n-2$ steps of the swap strategy $p_{n-2}(q_{i_1}) - p_{n-2}(q_{i_2})$ is equal to
\begin{align}\label{eqn:line_cases}
\begin{cases} 
i_2 - i_1 \quad &\text{ if } i_1 = i_2 \mod 2  \\ 
i_2 - i_1 + 4   &\text{ if } i_1 \text { even and } i_2  \text { odd }\\
i_2 - i_1 - 4   &\text{ if } i_2 \text { even and } i_1  \text { odd }\\
\end{cases}
\end{align}
In the first two cases $p_{n-2}(q_{i_1}) - p_{n-2}(q_{i_2}) > 0$, i.e. $q_{i_1}$ is now to the right of $q_{i_2}$ in the line, since by assumption $i_1 < i_2$. 
Hence, the corresponding qubits have switched their order within the line and must have been adjacent to each other at some point during the strategy.
If $i_2$ is even and $i_1$ is odd, it suffices to consider the case where $i_1 < i_2 -1$, since otherwise the corresponding qubits are initially adjacent.
Then $i_2 - i_1 \geq 3$ and Eq.~(\ref{eqn:line_cases}) implies $p_{n-2}(q_{i_1}) - p_{n-2}(q_{i_2}) \geq -1$.
Hence, the corresponding qubits either end up adjacent to one another or have switched their order after $n-2$ steps. This proves that we reach full connectivity after $n-2$ swap layers.

To prove optimality, consider qubit $q_0$ starting in the leftmost position of the line.
At any point of the process, all qubits left of $q_0$ have been adjacent to $q_0$ at some previous point.
In particular, this holds for the adjacent qubit to the left of $q_0$.
By this argument and as every node in the line graph has at most degree two, $q_0$ can only become connected to at most one additional qubit after every additional swap layer, namely by the edge to its right.
Since $q_0$ is initially only connected to $q_1$ and needs to be connected to $n-2$ additional qubits to reach full connectivity, it then follows that no swap strategy with less that $n-2$ layers can lead to full connectivity in the line graph.
\end{proof}

\subsection{Grid}

The swap strategy for the two-dimensional grid is an extension of the line swap strategy.

\begin{lemma}\label{lem:swap_strategy_grid}
For the two-dimensional grid of size $n$, there exists a swap strategy that reaches full connectivity in $n/2 + \mathcal{O}(\sqrt{n})$ layers.
For the three dimensional grid of size $n$ a strategy with depth $n/4 + \mathcal{O}(n^{2/3})$ exists.
\end{lemma}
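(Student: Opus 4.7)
\noindent\emph{Proof plan.} The plan is to verify the algorithm sketched informally in Sec.~\ref{sec:swap_strategies} by reducing full connectivity to two sub-claims: (a) after one execution of step~1 (parallel line swaps within all rows, with opposite parities on neighbouring rows), every pair of logical qubits sitting in two currently adjacent rows has occupied some pair of vertically adjacent physical sites at some intermediate time; and (b) after $\lceil x/2\rceil$ iterations of step~2 (two line-swap layers along each column), every unordered pair of rows has coincided as adjacent rows at some iteration. The conjunction of (a) and (b) implies that every pair of logical qubits has been placed on adjacent physical sites at some moment during the strategy, which is exactly what is needed to implement any $R_{ZZ}$ gate of a dense $H_C$.

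For (a) I would invoke Lemma~\ref{lem:swap_strategy_line} to handle the within-row case: after $x-2$ layers of the alternating line swap every two qubits in the same row have been adjacent. The across-row case is the crux, and I would re-use the explicit position formula from the proof of Lemma~\ref{lem:swap_strategy_line}. Because the two neighbouring rows start from opposite parities, the two position functions $p_k^{\mathrm{even}}$ and $p_k^{\mathrm{odd}}$ scan past each other, and a parity case analysis analogous to Eq.~(\ref{eqn:line_cases}) shows that for every pair $(c,c')$ with $c$ in the top row and $c'$ in the bottom row there is a time $0\le k\le x-1$ at which both qubits sit in the same column index, i.e., on vertically adjacent physical sites. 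The reason $x-1$ rather than $x-2$ layers are required is precisely that the ``missed'' boundary pair from Lemma~\ref{lem:swap_strategy_line} must be dragged across the row seam, costing one extra layer.

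For (b), observe that step~2 is exactly two consecutive layers of the line swap strategy applied to the column axis. Over $\lceil x/2\rceil$ iterations this yields $2\lceil x/2\rceil\ge x$ column-wise line-swap layers, which by the $k=n$ portion of the proof of Lemma~\ref{lem:swap_strategy_line} fully reverses the row ordering; a fortiori every pair of rows has appeared as an adjacent row pair at some intermediate iteration. Summing, each iteration uses $(x-1)+2=x+1$ layers, so the total is $\lceil x/2\rceil(x+1)\le (x+1)^2/2 = n/2+\sqrt{n}+1/2$, establishing the two-dimensional bound.

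For the three-dimensional grid with $n=x^3$, I would recurse: partition the cube into $x$ parallel $x\times x$ slabs along one axis, run the 2D strategy inside every slab in parallel (with opposite parities between adjacent slabs so that a slab-slab analogue of (a) is in force), and then apply two axial swap layers to cycle the slabs. Each iteration then costs $x^2/2+\mathcal{O}(x)+2$ layers, and $\lceil x/2\rceil$ iterations suffice, yielding a total of $x^3/4+\mathcal{O}(x^2)=n/4+\mathcal{O}(n^{2/3})$. I expect the main obstacle to be extending the parity-offset argument of (a) to the slab-slab interface: one must verify that the 2D subroutine, when run with mismatched parities on neighbouring slabs, not only connects qubit pairs within each slab but also brings every cross-slab pair into physical adjacency at some intermediate step. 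Because the 2D subroutine itself interleaves row and column swaps, the two parity offsets propagate through both passes and have to be tracked simultaneously; this is pure bookkeeping, but it is the only part of the argument that does not reduce cleanly to the one-dimensional Lemma~\ref{lem:swap_strategy_line}.
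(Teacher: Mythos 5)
Your proposal follows essentially the same route as the paper's proof: the cross-row connectivity claim is the paper's ``double line'' argument (opposite swap parities on adjacent rows force every cross-row pair to become vertically adjacent within $x-1$ layers), the row-shuffling step is justified by viewing the rows as nodes of a line and reusing Lemma~\ref{lem:swap_strategy_line}, and the layer count $\lceil x/2\rceil(x+1)\leq n/2+\sqrt{n}+1/2$ matches the paper's. Your three-dimensional sketch is the recursive slab generalization the paper only alludes to (``the proof for the three-dimensional case is similar''), so if anything you supply slightly more detail there than the paper does.
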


\begin{proof}
First, consider two equally long adjacent horizontal lines of qubits, where each qubit in the lower line is connected to the corresponding qubit in the upper line.
We apply the line strategy in Lemma~\ref{lem:swap_strategy_line} to both lines, where we begin by applying SWAP gates to even numbered edges in one line and odd numbered edges in the other, see Fig~\ref{fig:double_line_strategy}.
Since both lines are reversed after $n$ steps, any two nodes in the upper and lower line were adjacent at some point. 
Crucially, this is only possible because SWAP gates for odd edges on one line are executed simultaneously with SWAP gates on even edges of the other line and vice versa.
Thus, full connectivity is reached after at most $n$ steps.
Additionally, since both lines fully reverse no new connections are obtained in the last step and full connectivity is already reached after $n-1$ steps.

Second, consider the square grid with $n = x^2$ nodes divided into $x$ rows and columns.
The grid swap strategy, shown in Fig.~\ref{fig:grid_strategy} of the main text, repeats two steps.
\begin{enumerate}
    \item Apply $x-1$ steps of the line swap strategy separately to each row.
    Importantly, on two adjacent rows the SWAP gates must never simultaneously be on edges with the same parity.
    \item Swap the rows by applying exactly two steps of the line swap strategy to each column in parallel.  
\end{enumerate}
The double line example shows that after executing the first step a qubit in a row is connected to all the other qubits in its row and the neighboring rows.
The second step of the strategy is executed with two swap layers.
It swaps rows such that every row is now positioned next to two different rows.
Thus, step 1 connects qubits of adjacent rows and step 2 shuffles the order of rows such that all rows are adjacent at some point.
Since we perform two vertical swap layers in each iteration we reach full connectivity after repeating both steps $(x-2)/2$ times and step 1 one additional time at the end.
We thus need
\begin{displaymath}
\frac{x-2}{2} (x+1) + (x-1) =\frac{n}{2} + \mathcal{O}(\sqrt{n})
\end{displaymath}
swap layers, which proves the two-dimensional case.
The proof for the three-dimensional case is similar.
\end{proof}

\begin{figure}
\centering
\includegraphics[width=\columnwidth]{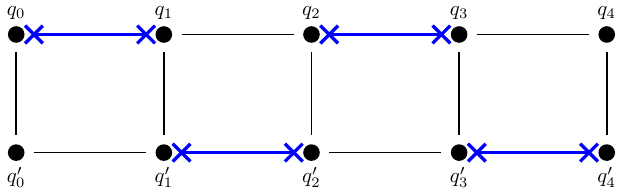}
\caption{
Swap strategy in a graph with two connected lines each with five vertices.
The blue SWAP gates show the first swap layer.
The qubit order in each line fully reverses after five steps and full connectivity is reached after four steps.
\label{fig:double_line_strategy}}
\end{figure}

\subsection{Heavy-hex}

A heavy-hex coupling map has a mixture of degree-two and degree-three nodes.
Its qubits are placed on the edges and vertices of hexagons~\cite{Chamberland2020}.
Each hexagon therefore has 12 qubits.
Here, we focus on $i\times j$ heavy-hex graphs which have $i$ rows and $j$ columns of hexagons, as exemplified by the $3\times 3$ heavy-hex graph in Fig.~\ref{fig:heavy_hex_longest_line}.
The total number of qubits is $n=5ij+4(i+j)-1$ and the length of the longest line in the graph is $l_\text{max}=4(ij+i+j)+1$.
The length of this line is related to the total number of qubits by $l_\text{max}=4[n(i,j)+i+j + 1]/5 + 1$ which, to leading order, scales as $4n/5$.
Furthermore, $l_\text{max}$ is bound from above by
\begin{align}\label{eqn:hh_line_bound}
\frac{4}{5}n+\frac{4}{5}\sqrt{n}+1
\end{align}
when the grid of hexagons is approximately square, i.e. $i\sim j$.
These preliminaries allow us to formulate the following Lemma for the heavy-hex swap strategy.

\begin{figure}
\centering
\includegraphics[width=\columnwidth]{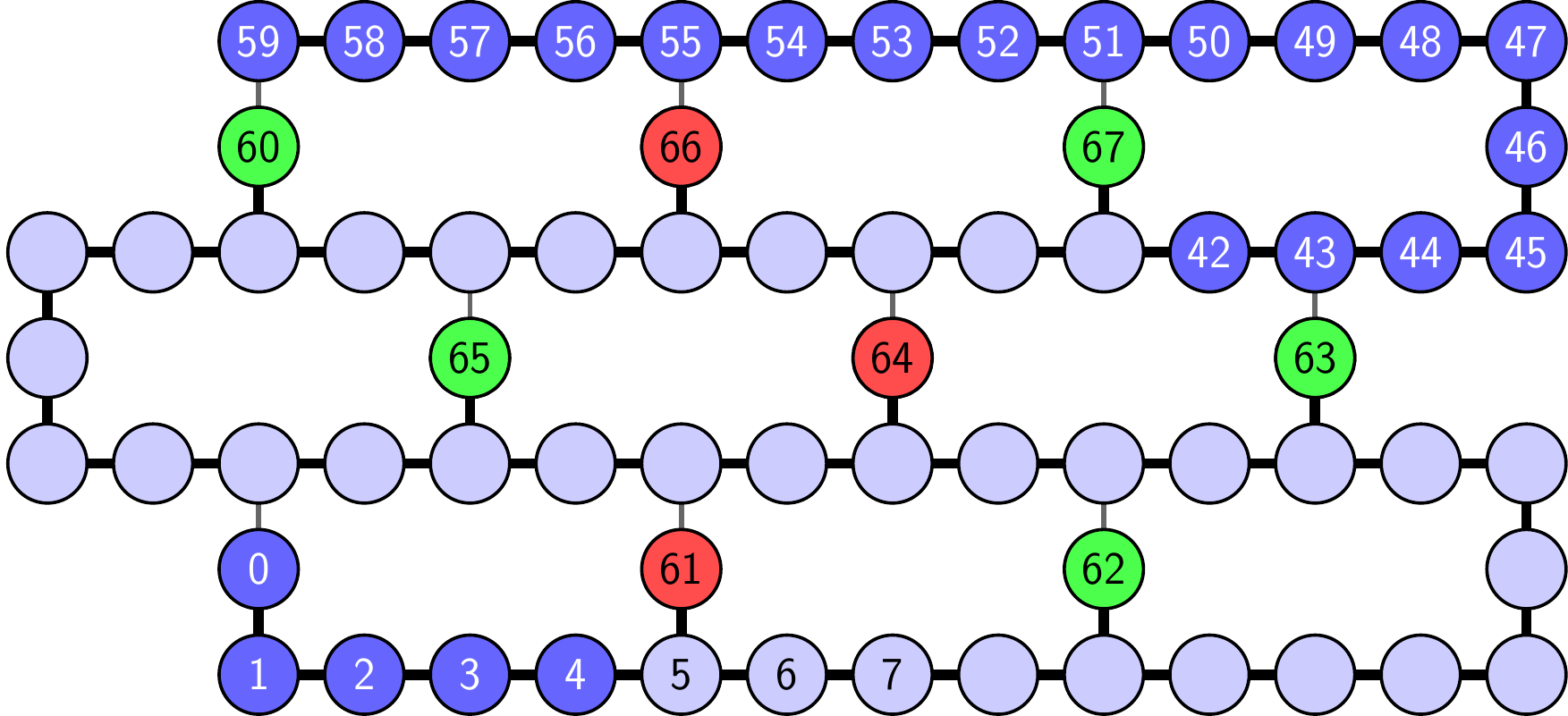}
\caption{
A three by three heavy-hex coupling map.
The longest line ranges from qubits 0 to 60.
By contrast, the blue qubits are part of the longest-line of the \emph{unfolded} heavy-hex graph with $l=60$.
The dark blue nodes indicate the tails with five and $t=18$ qubits.
The red and green nodes indicate the $A$ and $B$ qubits, respectively.
The grey edges are the edges that are removed to unfold the heavy-hex graph.
}\label{fig:heavy_hex_longest_line}
\end{figure}

\begin{lemma}\label{lem:heavy_hex}
For the approximately square heavy-hex grid of size $n$, there exists a swap strategy that reaches full connectivity in less than $n+\sqrt{n}+61$ swap layers.
\end{lemma}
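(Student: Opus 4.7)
The proof plan is to construct the swap strategy as the optimal line swap strategy of Lemma~\ref{lem:swap_strategy_line} applied to the longest continuous line $L$ of the heavy-hex graph, interleaved with ``tail-swap'' layers that exchange pendant tail qubits with their unique neighbours on $L$. First I would use Eq.~(\ref{eqn:hh_line_bound}) to bound the line length as $l \leq \tfrac{4}{5}n + \tfrac{4}{5}\sqrt{n} + 1$ and hence the number of tails as $t = n - l$. A key preliminary observation is that all tails can be swapped onto $L$ in a single layer: by the unfolded-heavy-hex picture of Fig.~\ref{fig:transpiler_comparison}(a), distinct attach points are separated on $L$ by at least three line edges, so the tail-swap edges form a matching on the heavy-hex graph.

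The schedule then has two phases. Phase~1 runs one full pass of the line strategy on $L$ for $l-2$ layers, which by Lemma~\ref{lem:swap_strategy_line} makes every pair of qubits currently on $L$ adjacent at some step. Phase~2 consists of tail-swap layers interleaved with short line-strategy sub-passes: each tail-swap cycles a batch of tails onto $L$ in place of the current occupants of the attach points, while the intermediate line-strategy sub-passes shift the attach-point identities by exploiting the explicit orbit formula in Eq.~(\ref{eqn:line_position}) and its reflected analogue. By scheduling which tails are swapped in at which step, one can guarantee that over the course of Phase~2 every tail qubit becomes adjacent to every remaining qubit (other tails, original attach points, and interior line qubits that were briefly displaced), and every displaced original-line qubit is reinserted into $L$ in time to still be covered by an ongoing line-strategy sub-pass.

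The main obstacle is bounding the total length of Phase~2 by $t + \mathcal{O}(\sqrt{n})$, so that the total swap-layer count is at most $(l-2) + t + \mathcal{O}(\sqrt{n}) + \mathcal{O}(1) = n - 2 + \mathcal{O}(\sqrt{n})$, which then fits into $n + \sqrt{n} + 61$. I would argue this by a combinatorial covering argument: after Phase~1 the uncovered pairs all lie in $T \times (T \cup A)$ with $|A|=|T|=t$, and each tail-swap together with a truncated line-strategy sub-pass covers an $\Omega(l)$-sized block of these pairs, so a constant number of such blocks suffices to cover the $\mathcal{O}(t^2) = \mathcal{O}(tl)$ residual pairs. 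The $\sqrt{n}$ term absorbs parity and end-of-line effects from Lemma~\ref{lem:swap_strategy_line}, together with irregularities at the boundary of the hexagonal grid where $L$ may be shorter than $4n/5$; the additive constant $61$ absorbs initial and final clean-up operations and $\mathcal{O}(1)$ scheduling overhead. The hardest step is verifying that the scheduling can actually be done within Phase~2 of length $\leq t + \mathcal{O}(\sqrt{n})$, rather than a larger multiple of $l$; this is where the careful orbit analysis of Eq.~(\ref{eqn:line_position}) becomes essential.
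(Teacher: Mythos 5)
The overall idea---run the line strategy on the longest line of the unfolded heavy-hex graph and periodically swap pendant (tail) qubits in and out---is the same as the paper's, but your schedule and its length bound do not work, and the failure point is the claim that after Phase~1 the uncovered pairs all lie in $T \times (T \cup A)$. A tail qubit parked at a pendant node attached to line position $j$ only becomes adjacent to the qubits that actually occupy position $j$ during the pass. From the orbit formula (Eq.~(\ref{eqn:line_position})), an even-indexed qubit $q_i$ visits position $j$ only if $i \leq j$ or $i \geq l+1-j$ (and symmetrically for odd $i$), so a position near an end of the line is visited by only about half the line qubits over one full pass. Since attach points occur at every fourth position all along the line, including near its ends, some tails still miss $\Theta(n)$ \emph{line} qubits after Phase~1, not just the other tails and attach points. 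This makes Phase~2 impossible to complete in $t + \mathcal{O}(\sqrt{n})$ layers: a qubit gains $\mathcal{O}(1)$ new adjacencies per swap layer, so a tail missing $\Theta(n)$ partners needs $\Theta(n)$ further productive layers, on the order of $n/2$, which exceeds $t + \mathcal{O}(\sqrt{n}) \approx n/5$. Your covering argument counts total residual pairs against total pairs covered per layer, but the binding constraint is per-qubit, and that is where the schedule breaks.

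The paper's construction resolves exactly this issue by making \emph{every} qubit, tails included, complete a full traversal of the line: the traversal is cut into five iterations of $k \approx l/4$ line-strategy steps each, and every qubit is parked in a pendant position ($A$ or $B$) for exactly one of the five iterations while moving for the other four, so each qubit accumulates the full $l$ steps' worth of adjacencies. The whole difficulty---and the bulk of the proof---is the bookkeeping showing no qubit is parked twice, done via the partition of line positions into eight groups $V_0,\dots,V_7$ plus $A$ and $B$ and the transition diagram induced by choosing $k \equiv 2 \pmod 8$. The resulting budget is $5k+10 \approx \tfrac{5}{4}\,l \approx n + \mathcal{O}(\sqrt{n})$, which is where the lemma's constant comes from; your target of $(l-2) + t + \mathcal{O}(\sqrt{n}) = n + \mathcal{O}(\sqrt{n})$ has the right magnitude but is not attainable by a single line pass followed by a short tail phase.
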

\begin{proof}
We prove Lemma~\ref{lem:heavy_hex} by unfolding the heavy-hex graph along a line of length $l$ where $l \mod 4 = 0$, see Fig.~\ref{fig:heavy_hex_longest_line}.
To unfold, we delete one edge connected to the nodes not in the longest line. 
The result is a line graph with an optional additional node connected to every fourth node, see Fig.~\ref{fig:heavy_hex_vertex_groups}.
The graph additionally has a tail of $5$ vertices on one side and a tail of $t$ vertices on the other side, such that $t = 2 \text{ mod } 4$, see Fig.~\ref{fig:heavy_hex_longest_line}.
Here, $t$ depends on the width of the heavy-hex graph. 
We will prove that Lemma~\ref{lem:heavy_hex} holds for any graph of this kind.

The proof applies a line swap strategy on the unfolded heavy-hex graph modified such that at any time some qubits remain in the positions of the additional nodes without moving along the line.
We divide the process into five iterations.
In each iteration a qubit either moves along $1/4$ of the line or waits in one of the adjacent nodes.
If every qubit only enters a waiting state once during the complete process, all qubits will have completed $l$ steps of the simple line strategy after five iterations, leading to full connectivity.
The difficulty then lies in ensuring that every qubit is swapped into a waiting position at most once.
We divide the additional nodes into two sets $A$ and $B$ spaced apart by eight nodes in the line, see Fig.~\ref{fig:heavy_hex_vertex_groups}.
We number the edges in the line by their position and define four swap layers to reach full connectivity.
\begin{itemize}
    \item $S_1$: All odd-numbered edges in the line.
    \item $S_2$: All even-numbered edges in the line.
    \item $S_3$: All edges connected to vertices in group $A$.
    \item $S_4$: All edges connected to vertices in group $B$.
\end{itemize}
We claim that the swap strategy in which we
\begin{enumerate}
    \item alternate between $S_1$ and $S_2$ $k-7$ times starting with $S_1$, and
    \item apply $S_4$ once, and
    \item alternate between $S_1$ and $S_2$ $7$ times starting with $S_2$, and
    \item apply $S_3$ once, and
    \item repeat Steps 1-4 five times,
\end{enumerate}
reaches full connectivity on the unfolded heavy-hex if
\begin{align}\label{eqn:hh_k}
k = \frac{l}{4} - \left( \frac{l}{4} \text{ mod } 8 \right) + 10.
\end{align}
Here, $k$ is chosen such that $k~{\rm mod}~8=2$ and $k>l/4$ so that every qubit travels the full line with four iterations of steps 1. and 3. Steps 2. and 4. swap qubits in and out of nodes $A$ and $B$, respectively.
Steps 1. to 4. require $k+2$ swap layers so that after five iterations the total number of swap layers is $5k+10$ which is thus bounded from above by $n+\sqrt{n}+61$ as seen by injecting Eq.~(\ref{eqn:hh_line_bound}) in Eq.~(\ref{eqn:hh_k}) and conservatively assuming $l/4~{\rm mod}~8=0$.

As argued above, it suffices to show that any particular qubit will remain in a position corresponding to groups $A$ or $B$ for at most one iteration.
We assign each node along the line to one of eight evenly-spaced groups $V_i$ with $i \in \{0, \dots , 7\}$.
The set of all vertices is thus partitioned into ten sets, $A$, $B$ and $V_i$, see Fig.~\ref{fig:heavy_hex_vertex_groups}.
We now examine how the qubits switch groups during steps 1. to 4.
Note that during each iteration $k$ steps of the line strategy are executed and $k$ was chosen, such that $k \text{ mod } 8 =2$.

We first ignore groups $A$ and $B$. 
Since $k$ is even and we begin by applying SWAP gates to odd numbered edges in the line, i.e. $S_1$,
qubits moving towards the left will switch from group $V_i$ to $V_{i-2}$ and qubits moving towards the right will move from $V_i$ to $V_{i+2}$.
If a qubit reaches the end of the line during an iteration it will either
\begin{itemize}
    \item switch direction from left to right and switch groups from $V_i$ to $V_{7-i}$, or
    \item switch direction from right to left and switch groups from $V_i$ to $V_{7-i+4}$.
\end{itemize}
This behavior results from $k~{\rm mod}~8=2$.
The movement pattern thus undergone by individual qubits is depicted in Fig.~\ref{fig:qubit_movement_pattern}.

\begin{figure}[htbp!]
     \centering
    \includegraphics[width=\columnwidth]{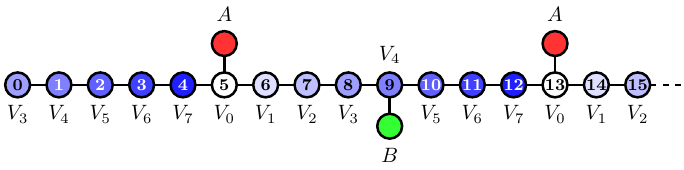}
    \caption{
    The unrolled heavy-hex graph with numbered nodes.
    The nodes are divided into ten subsets $A$ (in red) $B$ (in green) and $V_i$ for $i \in \{0,\dots,7\}$ (from white to blue).
    \label{fig:heavy_hex_vertex_groups}
    }
\end{figure}

\begin{figure}[htbp!]
    \centering
    \includegraphics[width=0.8\columnwidth,clip,trim=0 10 0 0 ]{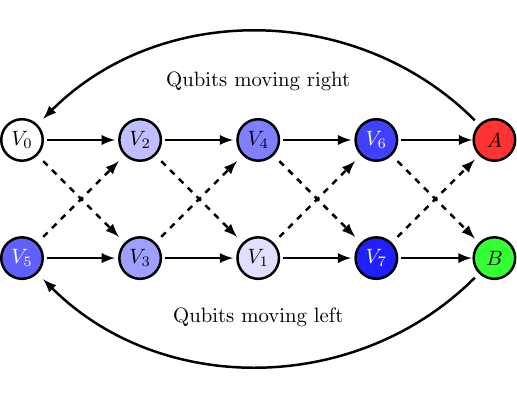}
    \caption{
    Qubit movement pattern during the heavy-hex swap strategy.
    Each node shows one of the ten categories of nodes defined in Fig.~\ref{fig:heavy_hex_vertex_groups}.
    The arrows show how the qubits change group after an iteration of steps 1. through 4. 
    Solid and dashed lines show qubits that were and were not reflected at line ends during the iteration, respectively.
    Note that a dashed arrow from $V_6$ to $V_5$ is not shown.
}\label{fig:qubit_movement_pattern}
\end{figure}

We now consider groups $A$ and $B$.
During one iteration qubits positioned in groups $A$ and $B$ will switch to groups $V_0$ and $V_5$, respectively.
From the outlined strategy, qubits only arrive in group $A$ if they ended the previous iteration in a group $V_i$ with 
\begin{align*}
i + 2 &= 0 \text{ mod } 8 &&\implies i = 6 \\
\text{or} \quad  7 - i &=  0 \text{ mod } 8  &&\implies i = 7.
\end{align*}
Since the swap layer swapping with group $B$ is executed after $k-7$ layers of $S_1$ in alternation with $S_2$ and nodes of group $B$ are connected with group $V_4$, a qubit can only arrive in group $B$ if it was initially positioned in a group $V_i$ with
\begin{align*}
i - 2 + 7 &= 4 \text{ mod } 8 &&\implies i = 7\\
\text{or} \quad  7 - i + 4 + 7 &=  4 \text{ mod } 8  &&\implies i = 6.
\end{align*}
In either case, a qubit can only arrive in groups $A$ or $B$ if it was previously positioned in group $V_6$ or $V_7$.
All possible movements of qubits among groups from one iteration to the next are thus captured by Fig.~\ref{fig:qubit_movement_pattern} and it is clear that within five iterations each qubit will only visit $A$ or $B$ at most once.
This concludes the proof.
\end{proof}

\section{Circuit depth and CNOT gate count\label{sec:circuit_depth_and_gate_count_details}}

We now investigate how the number of CNOT layers and gates in QAOA circuits, transpiled using the pass described in Sec.~\ref{sec:hardware_optimized_transpiler_pass}, scale with problem size. 
Here, $G_0=(V,E_0)$ is either a heavy-hex or an $\eta$-dimensional grid coupling map and $\{S_i\}$ with $i \in \mathbb{N}$ is a swap strategy compatible with $G_0$.
The transpiled circuit has alternating layers of $R_{ZZ}$ and SWAP gates applied on edge sets $E_i$ and $S_i$, respectively, see Fig.~\ref{fig:transpiled_qaoa_cost_layer}.
We determine the number of CNOT gates and layers by counting gates and layers in $E_i$ and $S_i$ taking into account gate cancellations across layers.

In every layer $E_i$ we apply $R_{ZZ}$ gates on the edges in $E_i$.
The number of edges in $E_i$ thus determines the number of $R_{ZZ}$ gates.
Here, $E_i$ is the set of edges in the hardware coupling map (i.e. $E_i\subseteq E_0$) which give new qubit connections after applying the swap layer $S_{i-1}$.
Edges which may give new qubit connections after $S_{i-i}$ are in the neighbourhood of swapped edges, i.e.
\begin{align}\label{eqn:swap_layer_subset}
E_i \subseteq \bigcup_{e \in S_{i-1}} N(e),
\end{align}
as exemplified in Fig.~\ref{fig:notation_example}.
Since the SWAP gates in a swap layer $S_i$ are executed in parallel $S_i$ never contains two neighboring edges, i.e. $e \notin N(e')~\forall~e, e' \in S_{i}$.
Equation~(\ref{eqn:swap_layer_subset}) therefore implies $E_i \subseteq E_0 \setminus S_{i-1}$.

\begin{figure}[htbp!]
\centering
\includegraphics[width=0.95\columnwidth]{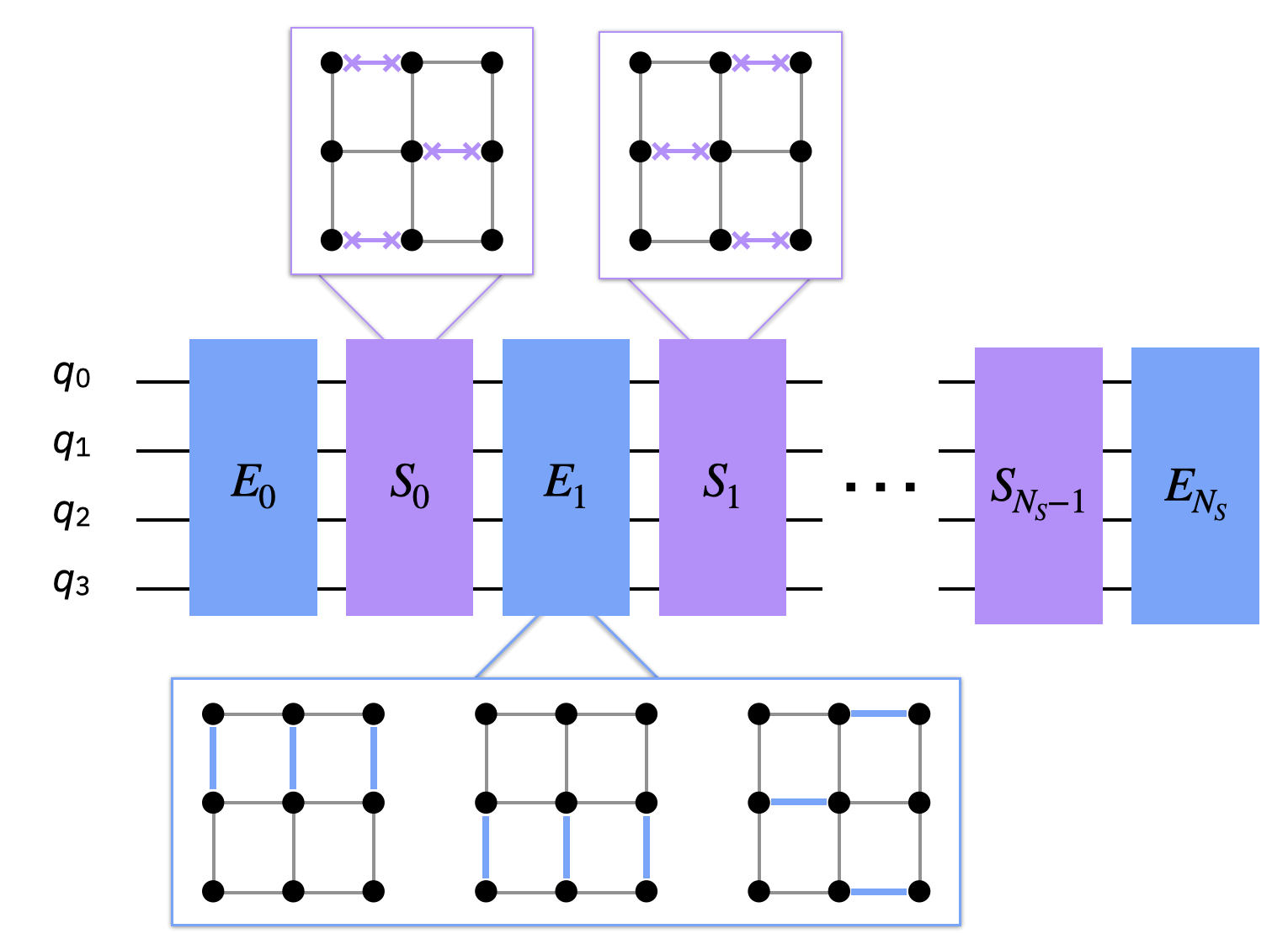}
\caption{
A QAOA cost layer transpiled with the hardware-optimized transpiler pass from Sec.~\ref{sec:hardware_optimized_transpiler_pass}. 
The circuit is split into blocks $E_i$ and layers $S_i$ of $R_{ZZ}$ and SWAP gates, respectively, exemplified on a two-dimensional grid.
Here, the last $R_{ZZ}$ layer of $E_i$ can be combined with $S_i$ by the circuit identity in Fig.~\ref{fig:swap_rzz_gate_cancellation}. \label{fig:transpiled_qaoa_cost_layer}
} 
\end{figure}

Carefully positioned CNOT gates across $E_i$ and $S_i$ may cancel.
We therefore position all $R_{ZZ}$ gates of $E_i$ that are applied across edges contained in $S_i$ at the end of $E_i$.
This allows us to simplify two CNOT gates and implement $\rm{SWAP}\cdot R_{ZZ}(\theta)$ with three CNOTs and a $R_Z(\theta)$ gate, see Fig.~\ref{fig:swap_rzz_gate_cancellation}.
The number of CNOT layers $L_{cx}(E_i, S_i)$ required to implement $E_i$ and $S_i$ is thus bounded by the edge chromatic number $\chi'(G_i)$ of the graph $G_i = (V, E_i \setminus S_i)$, i.e.
\begin{align}\label{eqn:edge_chromatic_number_bound}
L_{cx}(E_i, S_i) &\leq 2\chi'(G_i) + 3.
\end{align}
Here, the $R_{ZZ}$ gates in $E_i\setminus S_i$ are divided into $\chi'(G_i)$ sets of gates to execute in parallel each with two CNOT gates.
The extra three CNOT layers come from SWAP gates that may have absorbed $R_{ZZ}$ gates.
Similarly, the number of CNOT gates in $E_i$ and $S_i$ is 
\begin{align}\label{eq:gate_count_bound}
k_i = 2\lvert E_i \setminus S_i \rvert + 3 \lvert S_i \rvert
\end{align}
since a swap layer will always require three CNOT layers even after absorbing $R_{ZZ}$ gates from $E_i$.
Since $E_i \subseteq E_0 \setminus S_{i-1}$ we may write $|E_i\setminus S_i|\leq|(E_0\setminus S_{i-1})\setminus S_i|=|E_0|-|S_{i-1}|-|S_i|$ where the last equality requires that $S_{i-1}$ and $S_i$ are disjoint.
Under the same assumptions we also have $\chi'(G_i)\leq\chi'(G)-2$.
Therefore, Eq.~(\ref{eqn:edge_chromatic_number_bound}) and~(\ref{eq:gate_count_bound}) yield
\begin{lemma}\label{lem:cnot_depth_gate_count_bound}
If $S_i\cap S_{i-1}=\varnothing$ the number of CNOT gates $k_i$ required to implement $E_i$ and $S_i$ satisfies
\begin{displaymath}
k_i \leq 2\lvert E_0 \rvert - 2\lvert S_{i-1} \rvert  + \lvert S_i \rvert.
\end{displaymath}
If there exists an edge coloring of the coupling map $E_0$ with $\Delta_i$ colors, such that one color corresponds to the set of edges contained in  $S_{i-1}$ and all edges of $S_i$ are colored in the same color, then the number of CNOT layers $L_\text{cx}(E_i, S_i)$ required to implement $E_i$ and $S_i$ satisfies
\begin{align*}
L_\text{cx}(E_i, S_i) &\leq 2\Delta_i-1.
\end{align*}
\end{lemma}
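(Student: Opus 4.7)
The plan is to combine the two starting inequalities already established just before the lemma, namely the gate-count identity $k_i = 2|E_i\setminus S_i| + 3|S_i|$ from Eq.~(\ref{eq:gate_count_bound}) and the layer bound $L_{cx}(E_i,S_i)\le 2\chi'(G_i)+3$ from Eq.~(\ref{eqn:edge_chromatic_number_bound}), with the set-theoretic consequences of the relation $E_i \subseteq E_0\setminus S_{i-1}$. The key observation to carry through both parts is that the assumption $S_i\cap S_{i-1}=\varnothing$ (respectively the coloring assumption in the second part) lets us pass from set containments to numerical bounds without losing any terms.

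For the gate count, I would first invoke $E_i\subseteq E_0\setminus S_{i-1}$, which was argued just before the lemma via Eq.~(\ref{eqn:swap_layer_subset}) together with the fact that parallel SWAPs cannot occupy neighboring edges. Intersecting with the complement of $S_i$ and using $S_i\cap S_{i-1}=\varnothing$ gives $E_i\setminus S_i \subseteq E_0\setminus(S_{i-1}\cup S_i)$ with $|S_{i-1}\cup S_i| = |S_{i-1}|+|S_i|$, so $|E_i\setminus S_i|\le |E_0|-|S_{i-1}|-|S_i|$. Substituting into Eq.~(\ref{eq:gate_count_bound}) and collecting terms yields exactly $k_i\le 2|E_0|-2|S_{i-1}|+|S_i|$.

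For the layer count, the coloring hypothesis provides a partition of $E_0$ into $\Delta_i$ matchings such that one matching equals the edge set $S_{i-1}$ and a second matching contains $S_i$. Because $E_i\setminus S_i\subseteq E_0\setminus(S_{i-1}\cup S_i)$, the graph $G_i=(V,E_i\setminus S_i)$ uses edges from at most $\Delta_i-2$ of the available colors, each of which is already a matching on the coupling map. Hence $\chi'(G_i)\le \Delta_i-2$, and inserting this into Eq.~(\ref{eqn:edge_chromatic_number_bound}) gives $L_{cx}(E_i,S_i)\le 2(\Delta_i-2)+3 = 2\Delta_i-1$.

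The only real subtlety — and the place I would spend care in the write-up rather than calculation — is the bookkeeping that justifies the "$+3$" in Eq.~(\ref{eqn:edge_chromatic_number_bound}) and the "$3|S_i|$" in Eq.~(\ref{eq:gate_count_bound}) after the $\mathrm{SWAP}\cdot R_{ZZ}$ absorption in Fig.~\ref{fig:swap_rzz_gate_cancellation}: one has to check that the absorbed $R_{ZZ}$ gates are precisely those whose edges lie in $S_i$, so they do not contribute to $E_i\setminus S_i$ and are not double counted. Once this is made explicit, both bounds follow from the two containments above via elementary counting, so I do not expect any other obstacle.
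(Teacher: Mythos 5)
Your proposal is correct and follows essentially the same route as the paper, which derives the lemma by combining $E_i\subseteq E_0\setminus S_{i-1}$ with the disjointness assumption to get $\lvert E_i\setminus S_i\rvert\leq\lvert E_0\rvert-\lvert S_{i-1}\rvert-\lvert S_i\rvert$ and $\chi'(G_i)\leq\Delta_i-2$, then substitutes into Eqs.~(\ref{eq:gate_count_bound}) and~(\ref{eqn:edge_chromatic_number_bound}). Your added remark about verifying that the absorbed $R_{ZZ}$ gates are exactly those on edges of $S_i$ is a reasonable point of care, but it is already built into the paper's definitions of $E_i\setminus S_i$ and the ``$+3$'' terms.
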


\begin{figure}[t]
    \centering
    \includegraphics[width=\columnwidth]{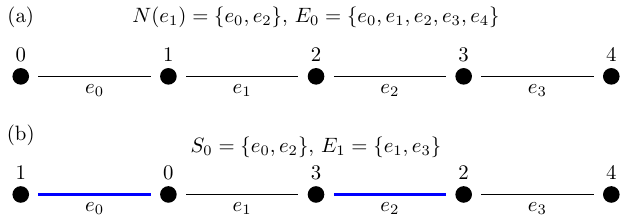}
    \caption{Explanation of the notation on a line graph. (a)~A line graph with five nodes and, as example, $N(e_1)$. (b) Notation after the first SWAP layer. The set of edges that give new qubit connections is $E_1=\{e_1, e_3\}$.}
    \label{fig:notation_example}
\end{figure}

\begin{figure}[tbp!]
    \centering
    \includegraphics[width=0.9\columnwidth]{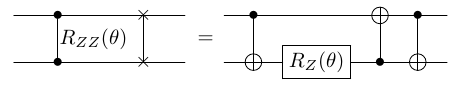}
    \caption{An $R_{ZZ}$ and a SWAP gate combine into a gate sequence with three CNOTs and one $R_Z$ gate. \label{fig:swap_rzz_gate_cancellation}} 
\end{figure}

\subsubsection{Grid coupling maps}

We now apply Lemma~\ref{lem:cnot_depth_gate_count_bound} to the swap strategies on the line, two- and three-dimensional grids with $L_S$ swap layers.
These strategies satisfy $S_i\cap S_{i-1}=\varnothing\,\forall\,i$ and have $S_{-1} = S_{L_S} = \varnothing$, see Appendix~\ref{sec:swap_strategy_details}.
The number of CNOT layers in the complete cost layer is then bounded by
\begin{align} \notag
k = \sum_{i = 0}^{L_{S}} k_i
&\leq \sum_{i = 0}^{L_{S}}  2\lvert E_0 \rvert - 2\lvert S_{i-1} \rvert + \lvert S_i \rvert\\
&= 2 \lvert E_0 \rvert (L_S  +1) - \sum_{i=0}^{L_S} \lvert S_i \rvert \label{eqn:k_computed}
\end{align}
In the line graph, every SWAP layer contains half of the edges (on average in case the number of edges is odd).
More generally, for the $\eta$-dimensional grid strategy the number of swap gates is, on average,
\begin{align}\label{eqn:graph_averages}
\lvert S_i \rvert = \frac{1}{2\eta}\lvert E_0\rvert ~\text{where}~ \lvert E_0 \rvert = \eta n + \mathcal{O}(n^{\frac{\eta -1}{\eta}}).
\end{align}
Combining Eq.~(\ref{eqn:k_computed}) and (\ref{eqn:graph_averages}) yields the bound 
\begin{align}
    k\leq&\, \frac{4\eta - 1}{2}n(L_S+1)+\mathcal{O}(L_Sn^{\frac{\eta -1}{\eta}})
\end{align}
on the total number of CNOT gates $k$ required to implement $\exp(-i\gamma H_C)$.
These bounds are summarized in Tab.~\ref{tab:depth_gate_count} of the main text as $(4\eta-1)/2nL_S$.
Furthermore, since the swap layers in the grid strategy for the $\eta$-dimensional grid graph exactly correspond to an edge coloring with $2\eta$ colors, the second requirement from Lemma~\ref{lem:cnot_depth_gate_count_bound}, i.e. the existence of the edge coloring, also holds with
\begin{displaymath}
\Delta_i = \begin{cases}
2 \eta  &0 < i < L_S\\
2 \eta + 1 &i = 0.
\end{cases}
\end{displaymath}
Together with the $2(2 \eta -1)$ CNOT layers required to implement the final layer $E_{L_S}$, this bounds the total number of CNOT layers $L_\text{cx}$ of a cost layer with $L_S$ swap layers following
\begin{align*}
L_\text{cx} &\leq 2(2 \eta -1)  + \sum_{i=0}^{L_S-1}2 \Delta_i -1 \\ 
&= (4 \eta -1)(L_S+1) + 1.
\end{align*}
Plugging in $\eta = 1,2,3$ gives the numbers in Tab.~\ref{tab:depth_gate_count} of the main text.

\subsubsection{Heavy-hex coupling maps}

We now consider heavy-hex graphs with the same number of rows and columns and with $n$ qubits.
This graph has $\lvert E_0 \rvert = 6n/5 + \mathcal{O}\left(\sqrt{n}\right)$
edges and a longest line with length
$l_{max} = 4n/5 + \mathcal{O}\left(\sqrt{n}\right)$,
as exemplified in  Fig.~\ref{fig:heavy_hex_longest_line}.
Thus, each SWAP layer $S_i$ consists of at most $2n/5 + \mathcal{O}\left(\sqrt{n}\right)$ SWAP gates. 
It also holds that 
\begin{displaymath}
\lvert E_i \setminus S_i \rvert + \lvert S_i \rvert \leq \frac{3}{5}n + \mathcal{O}\left(\sqrt{n}\right)
\end{displaymath}
for all $0 <i < L_S$.
Using Eq.~(\ref{eq:gate_count_bound}), we can therefore bound the number of CNOT gates $k_i$ in the combined layers $E_i$ and $S_i$ for $0 < i < L_S$ by
\begin{displaymath}
k_i \leq \frac{8}{5}n + \mathcal{O}\left(\sqrt{n}\right).
\end{displaymath}
$E_0$ and $S_0$ contain at most $14n/5 + \mathcal{O}(\sqrt{n})$ and $E_{L_S}$ at most $12n/5 + \mathcal{O}(\sqrt{n})$ CNOT gates, yielding a bound for the total number of CNOT gates $k$
\begin{displaymath}
k \leq (4L_S+9)\frac{2}{5}n + \mathcal{O}(\sqrt{n}).
\end{displaymath}
Furthermore, every layer $E_i$ can be executed with $2$ individual $R_{ZZ}$ layers.
However, it is not possible to cancel out every gate in any of the resulting $R_{ZZ}$ layers with the subsequent swap layer $S_{i}$.
As a result, we get the following bounds for the total depth and number of CNOT layers $d$ and $d_{CNOT}$ in the heavy-hex case
\begin{align*}
d &\leq 9L_S + 10\\
d_{CNOT} &\leq 7L_S + 6
\end{align*}

\begin{figure}[b!]
    \centering
    \includegraphics[width=0.95\columnwidth]{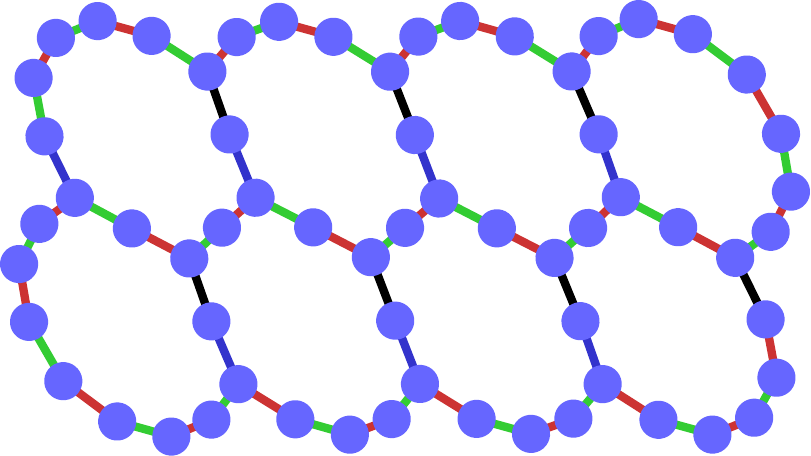}
    \caption{
    Heavy-hex graph with two rows and four columns.
    The color of the edges shows the edge coloring used to prioritize the $R_{ZZ}$ gates which can be simultaneously applied.
    The order of the applied gates is red, green, blue, and black.
    Furthermore, red and green edges form the longest line.
    }
    \label{fig:hh_example}
\end{figure}

\begin{figure*}[htbp!]
    \centering
    \includegraphics[width=\textwidth]{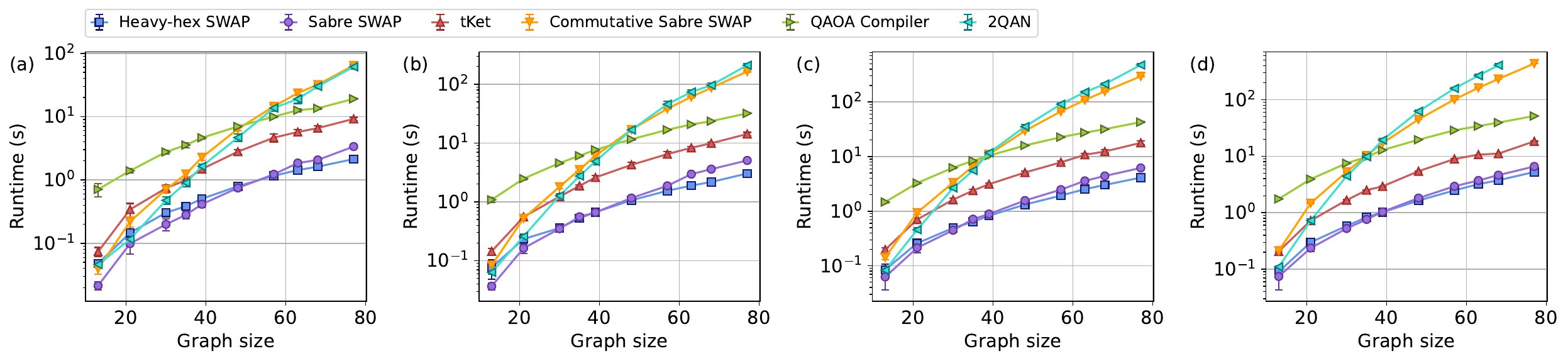}
    \caption{Logarithmic transpiler runtime for QAOA circuits of graph instances with different size and density after transpilation to a heavy-hex coupling map using  SabreSwap, commutative aware SabreSwap, $t|\text{Ket}\rangle$, 2QAN, and the heavy-hex swap strategy.
    Each transpiler was allowed a maximum runtime of ten minutes.
    Each data point is an average over ten random graphs. The lines show the average and the error bars show the standard deviation.
    (a), (b), (c), and (d) correspond to graphs with densities of 0.25, 0.5, 0.75 and 1.0, respectively.
    }
    \label{fig:transpiler_runtime}
\end{figure*}

\section{Commutation aware SabreSwap\label{sec:commute_sabre}}

Quantum circuits can be represented as directed acyclic graphs (DAG) in which nodes are instructions and edges are qubits.
The SabreSwap algorithm traverses the DAG. 
It first builds up a front layer with the gates in the DAG that are first executed on the qubits.
Next, the algorithm inserts SWAPs attempting to minimize the distance between qubits that interact in the front layer.
When two qubits become adjacent the gate from the front layer is inserted and the front layer is updated with the next gates from the DAG.

In the commutation aware version of SabreSwap, the front layer is adapted to contain all upcoming commuting gates obtained from the dependency DAG.
The DAG dependency is built by taking commutation relations into account~\cite{Iten2020}.
Here, two gate nodes are only connected by an edge if the corresponding gates do not commute.

\section{Maximum cut\label{sec:maxcut}}

In the weighted maximum cut problem we are tasked to partition the set of nodes $V$ of a given graph $G=(V,E)$ in two such that the sum of the edge weights $\omega_{i,j}$ with $(i,j)\in E$  traversed by the cut is maximum~\cite{Goemans1995}.
This can be formulated as the QUBO problem
\begin{align}
    \max\frac{1}{2}\sum_{(i,j)\in E}\omega_{i,j}(1-z_iz_j), \notag \\
    \text{such that}\quad z\in\{-1,1\}^{|V|}.\notag
\end{align}
The binary variable $z_i$ indicates which side of the cut node $i$ is.
In Sec.~\ref{sec:runtime_program} we consider weighted maximum cut instances in which the weights can be negative as well as positive~\cite{Hong2008}.

\section{Swap strategy benchmark\label{sec:swap_strat_benchmark_appendix}}

This section provides additional details on the swap strategy benchmark presented in the main text.
For SabreSwap, commutative aware SabreSwap, the QAOA Compiler, 2QAN, and t$|\text{Ket}\rangle$ we generate the \texttt{qaoa\_circuit} using $R_{ZZ}$ gates based on the problem graph.
For the swap strategies we create the \texttt{qaoa\_circuit} as a single instruction built from the \texttt{PauliSumOp} in Qiskit which the transpiler must identify as being made of commuting two-qubit Pauli terms.
In all cases the transpiler must map the instructions to hardware native CNOT gates.
The CNOT circuit depth shown in the main text is computed using the Qiskit function
\begin{minted}[fontsize=\small]{python}
circuit.depth(
    lambda x: isinstance(x.operation, CXGate)
)
\end{minted}
We now list the key passes with which we benchmark the different transpilers.
We obtain the execution time of each transpiler with \texttt{time} as exemplified for t$|\text{Ket}\rangle$.
The t$|\text{Ket}\rangle$ transpiler steps are listed below.
\begin{minted}[fontsize=\small]{python}
tk_circ = qiskit_to_tk(qaoa_circuit)

cmap = Architecture(coupling_map.get_edges())

start = time.time()
placement = GraphPlacement(cmap)
CXMappingPass(cmap, placement).apply(tk_circ)
KAKDecomposition(allow_swaps=False).apply(tk_circ)
final = time.time()

final_circuit = tk_to_qiskit(tk_circ)
\end{minted}

The QAOA Compiler, described in Ref.~\cite{Alam2020}, performs an initial mapping of QAOA and routs the resulting circuit to the hardware using an external compiler. 
The external compiler is called on partial circuits and the results are then stitched together.
We run the QAOA Compiler with Qiskit on optimization level 3 as external compiler and use the Variation-aware Incremental Compilation (VIC) setting described in Ref.~\cite{Alam2020}.
We therefore create an instance of \texttt{CompileQAOAQiskit} and call its \texttt{run\_incr\_c} method with \texttt{variation\_aware=True} and \texttt{ initial\_layout\_method="qaim"}.
The configuration used to initialize the \texttt{CompileQAOAQiskit} instance is
\begin{minted}[fontsize=\small]{python}
config = {
    "Backend": "qiskit",
    "Target_p": "1",
    "Packing_Limit": "10e10",
    "Route_Method": "sabre",
    "Trans_Seed": "0",
    "Opt_Level": "3",
}
\end{minted}
To benchmark the swap strategies in Qiskit we use the pass manager
\begin{minted}[fontsize=\small]{python}
pm = PassManager([
    FindCommutingPauliEvolutions(),
    Commuting2qGateRouter(swap_strat, edge_color),
    Decompose(),
    Decompose(),
    CXCancellation(),
])
\end{minted}
Here, the swap strategy contains the information on the coupling map and the double \texttt{Decompose} replaces the routed two-qubit Pauli evolution operators by CNOT gates.
The \texttt{edge\_color}, exemplified in Fig.~\ref{fig:hh_example}, is used to prioritize which $R_{ZZ}$ gates should be applied first at each step of the swap strategy.
This increases the number of $R_{ZZ}$ gates that are followed-by SWAP gates such that the \texttt{CXCancellation} pass can cancel consecutive CNOT gates as shown in Fig.~\ref{fig:swap_rzz_gate_cancellation}.
For SabreSwap we replace the \texttt{FindCommutingPauliEvolutions} and \texttt{Commuting2qGateRouter} with the single pass
\begin{minted}[fontsize=\small]{python}
SabreSwap(
    coupling_map=coupling_map, 
    heuristic='basic'
)
\end{minted}

The 2QAN~\cite{Lao2021} benchmark, shown below, uses the \texttt{QuRouter} imported from the \texttt{py2qan} package with a trivial initial mapping.
\begin{minted}[fontsize=\small]{python}
router = QuRouter(
    qasm=qaoa_circuit.qasm(),
    init_map={idx: idx for idx in range(n)}, 
    coupling_map=coupling_map.get_edges(),
)

circ_2qan, _ = router.run_qaoa(
    layers=1, 
    gammas=[0.5], 
    betas=[0.75], 
    msmt=True
)
\end{minted}

In addition to the circuit depth and CNOT gate count presented in Fig.~\ref{fig:transpiler_comparison} of the main text we also compute the time it takes to transpile the corresponding circuits.
We find that the commutative aware version of SabreSwap and 2QAN have the runtime with the worst scaling while the swap strategies and SabreSwap are up to two orders of magnitude faster, see Fig.~\ref{fig:transpiler_runtime}.

\section{Experiment details\label{sec:graph_details}}

The seven node graph used in the Qiskit Runtime example can be embedded with one swap layer $S_0=\{(0, 1), (3, 5)\}$ on the seven qubit \emph{ibm\_nairobi} system.
The weights of the graph were chosen at random from $\{-1, 1\}$.
The seven node graph given as sums of Pauli-$Z$ operators is $\mathcal{G}_{10}=-Z_1Z_0+Z_2Z_1-Z_4Z_3-Z_5Z_4-Z_2Z_0+Z_6Z_5+Z_5Z_0+Z_6Z_3-Z_3Z_1+Z_5Z_3$ and has a single maximum cut with a value of three, see Fig.~\ref{fig:nairobi_map}.
The properties of the hardware and the gates as reported by \emph{ibm\_nairobi} on the date the data were acquired are given in Tab.~\ref{tab:nairobi_prop} and~\ref{tab:nairobi_gate}
.

\begin{figure}[htbp!]
    \centering
    \includegraphics[width=0.8\columnwidth]{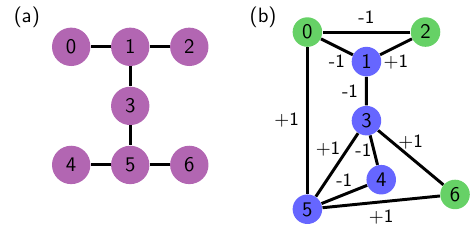}
    \caption{
    (a) Coupling map of \emph{ibm\_nairobi} and (b) the graph $\mathcal{G}_{10}$.
    The node coloring shows the maximum cut.
    }
    \label{fig:nairobi_map}
\end{figure}

\begin{table}[htbp!]
    \centering{
    \begin{tabular}{l r r r r} \hline\hline
 & $T_1$ & $T_2$ & $\sqrt{X}$ Error  & RO Error \\
Qubit & ($\mu{\rm s}$) & ($\mu{\rm s}$) & $\times 10^{-4}$ & \% \\ \hline
0 & 124 &  61 & 2.5 & 2.1 \\
1 & 100 & 132 & 6.2 & 3.2 \\
2 & 120 & 158 & 1.7 & 2.1 \\
3 &  91 &  65 & 4.0 & 3.8 \\
4 &  67 &  83 & 2.9 & 2.4 \\
5 & 132 &  44 & 3.3 & 2.9 \\
6 & 165 & 116 & 2.2 & 2.8 \\ \hline\hline
    \end{tabular}
    \caption{Hardware properties of ibm\_nairobi on 22.12.2021 as reported by the backend. RO stands for readout error.\label{tab:nairobi_prop}}
    }
\end{table}

\begin{table}[htbp!]
\centering{
\begin{tabular}{l r r | l r r}\hline\hline
Qubit & \multicolumn{2}{c|}{CNOT} & Qubit & \multicolumn{2}{c}{CNOT} \\
pair & Error & length & pair & Error & length \\ \hline
(0, 1) & 1.2 & 249 & (1, 2) & 0.5 & 391 \\
(4, 5) & 0.6 & 277 & (5, 6) & 0.8 & 306 \\
(1, 3) & 0.8 & 270 & (3, 5) & 1.2 & 235 \\ \hline\hline
\end{tabular}
\caption{Two-qubit gate error and length in \% and ${\rm ns}$, respectively, reported by \emph{ibm\_nairobi} on 22.12.2021.\label{tab:nairobi_gate}}
}
\end{table}

\begin{figure}[htbp!]
    \centering
    \includegraphics[width=0.9\columnwidth]{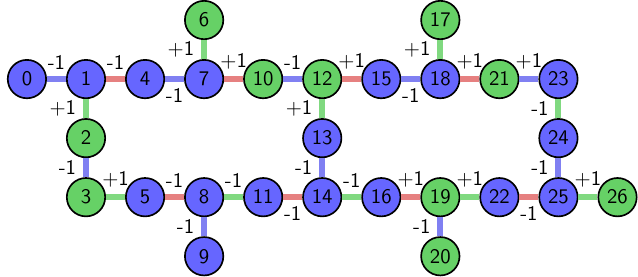}
    \caption{
    Coupling map of \emph{ibmq\_mumbai}.
    The edge coloring indicates which CNOT gates can be applied in parallel and the edge weights are those of the graph of the 27 node maximum cut problem.
    The node coloring shows the maximum cut, found with CPLEX, which has a value of 12.
    }
    \label{fig:mumbai_cmap}
\end{figure}

\clearpage

The 27 node graph is native to the coupling map of \emph{ibmq\_mumbai}, see Fig.~\ref{fig:mumbai_cmap}.
The results presented in the main text use CVaR aggregation with $\alpha=0.5$, i.e., the expectation value of $H_C$ is computed from the best 50\% of the shots.
Without CVaR aggregation we observe a noisy convergence at depth-one and no convergence at depth-two, see Fig.~\ref{fig:mumbai_v1}.
We allow SPSA 100 iterations at depth-one and 80 iterations at depth-two to avoid exceeding the maximum allowed time for a Qiskit Runtime program.
When SPSA terminates Qiskit returns the last values of $\boldsymbol{\beta}$ and $\boldsymbol{\gamma}$ and samples the state produced by the corresponding circuit.
For the 27 qubit problem instance with depth-two the optimizer does not converge properly. 
We therefore resample the circuit five times, each with $2^{14}$ shots, at the $\boldsymbol{\beta}$ and $\boldsymbol{\gamma}$ values that yielded the minimum energy. 
The average energy $-16.57\pm0.16$ of the five runs is shown as a star in Fig.~\ref{fig:mumbai}(a) and the best counts distribution is shown in Fig.~\ref{fig:mumbai}(b) of the main text.
The seven qubit problem on \emph{ibm\_nairobi} calibrates a readout-error mitigation matrix with $2^7$ circuits.
We therefore do not use readout error mitigation on the 27 qubit problem and leave it up to future work to investigate the effect of scalable readout error mitigation such as M3 on QAOA~\cite{Nation2021}.
The properties of the hardware and the gates as reported by \emph{ibmq\_mumbai} on the date the data were acquired are given in Tab.~\ref{tab:mumbai_prop} and~\ref{tab:mumbai_gate}.

\begin{figure}[htbp!]
\includegraphics[width=\columnwidth]{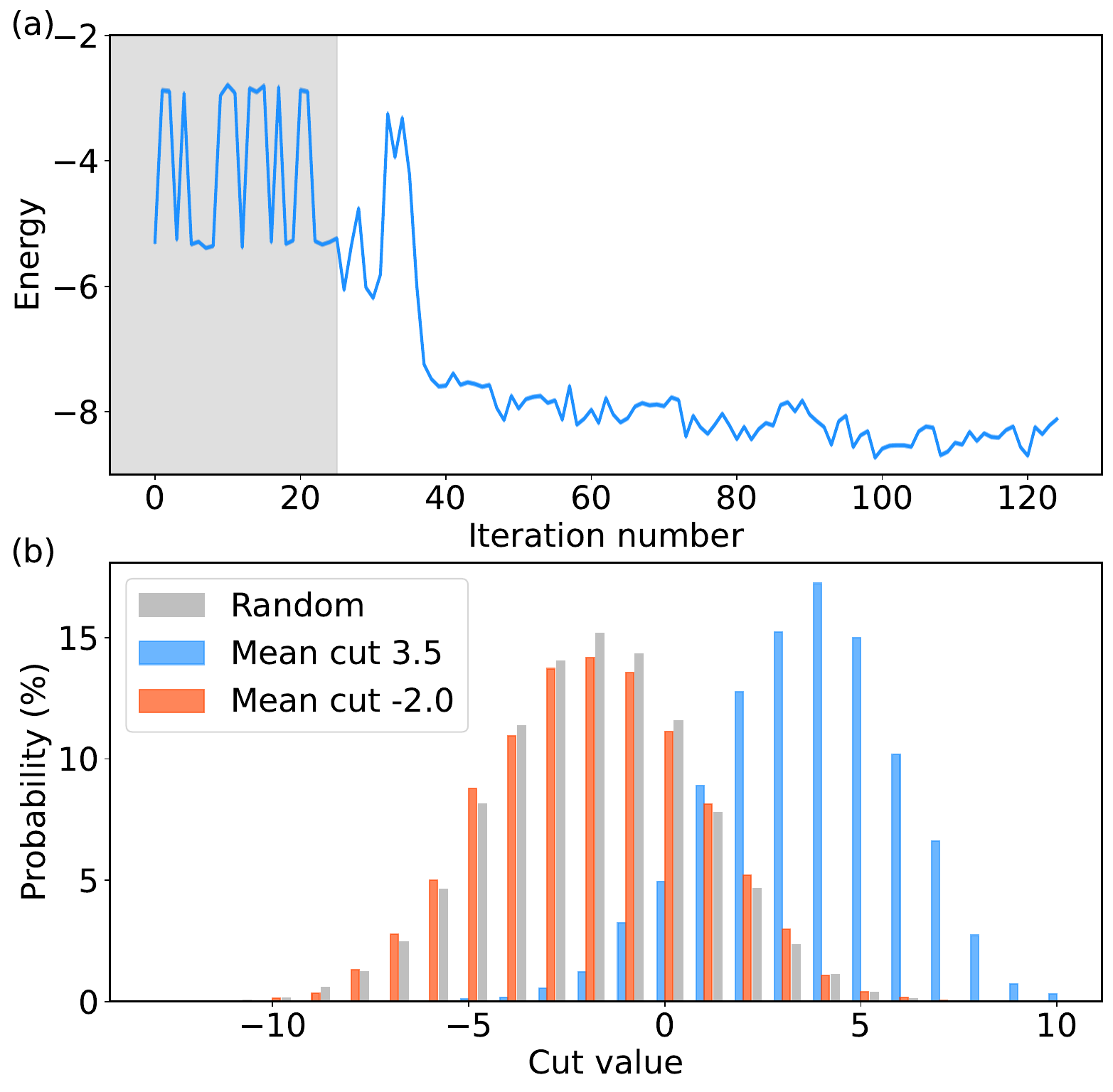}
\caption{
QAOA without CVaR on the graph native to the hardware of \emph{ibmq\_mumbai}.
At each energy evaluation 16384 shots are gathered.
(a) Energy as function of the iteration number for depth-one QAOA initialized at $\gamma=1$ and $\beta=0.5$.
In the first 25 iterations (gray area), SPSA calibrates the learning rate and perturbation based on the measured energy.
(b) Probability distribution of the best measured energy for depth-one (blue) compared to random sampling (gray) and depth-two (red).
\label{fig:mumbai_v1}
}
\end{figure}

\begin{table}[htbp!]
    \centering{
    \begin{tabular}{l r r r r} \hline\hline
 & $T_1$ & $T_2$ & $\sqrt{X}$ Error  & RO Error \\
Qubit & $\quad$($\mu{\rm s}$) & $\quad$($\mu{\rm s}$) & $\times 10^{-4}$ & \% \\ \hline
0 &  99 &  23 & 2.7 & 4.0 \\
1 & 168 & 171 & 1.9 & 2.1 \\
2 & 100 & 101 & 1.9 & 2.8 \\
3 & 119 & 192 & 2.4 & 1.9 \\
4 & 115 &  39 & 2.4 & 1.6 \\
5 & 121 & 137 & 2.9 & 1.8 \\
6 & 116 &  46 & 2.5 & 1.5 \\
7 & 170 &  71 & 2.4 & 2.5 \\
8 & 155 &  80 & 2.5 & 4.7 \\
9 &  69 &  98 & 6.0 & 2.4 \\
10 & 131 & 222 & 2.0 & 1.2 \\
11 &  96 &  88 & 1.8 & 3.0 \\
12 & 215 & 313 & 1.7 & 1.8 \\
13 & 165 &  40 & 1.6 & 3.7 \\
14 & 169 & 337 & 3.5 & 2.9 \\
15 &  65 &  84 & 2.3 & 2.1 \\
16 & 193 & 129 & 1.5 & 2.3 \\
17 & 135 & 178 & 2.2 & 1.5 \\
18 & 194 & 217 & 1.2 & 1.4 \\
19 & 211 & 272 & 1.7 & 1.3 \\
20 &  99 & 149 & 1.9 & 1.8 \\
21 &  65 &  81 & 3.5 & 2.9 \\
22 & 172 &  87 & 2.4 & 2.0 \\
23 & 109 &  30 & 4.8 & 11.9 \\
24 & 148 &  62 & 1.7 & 1.9 \\
25 & 180 &  71 & 1.6 & 1.5 \\
26 & 140 & 241 & 2.0 & 1.8 \\ \hline\hline
    \end{tabular}
    \caption{Hardware properties of \emph{ibmq\_mumbai} on 26.07.2022 as reported by the backend. RO stands for readout error.\label{tab:mumbai_prop}}
    }
\end{table}

\begin{table}[htbp!]
\centering{\small
\begin{tabular}{l r r | l r r}\hline\hline
Qubit & \multicolumn{2}{c|}{CNOT} & Qubit & \multicolumn{2}{c}{CNOT} \\
pair & Error & length & pair & Error & length \\ \hline
(0, 1) &   0.7 &  420 & (1, 2) & 1.5 & 704 \\
(1, 4) &   1.0 &  348 & (2, 3) & 0.7 & 391 \\
(3, 5) & -- &  356 & (4, 7) & 0.9 & 590 \\
(5, 8) &   2.3 & 1316 & (6, 7) & 0.7 & 249 \\
(7, 10) & 0.7 & 398 & (8, 9) & -- & 718 \\
(8, 11) & 4.0 & 590 & (10, 12) & 0.6 & 363 \\
(11, 14) & 1.9 & 370 & (12, 13) & 0.6 & 548 \\
(12, 15) & 0.6 & 370 & (13, 14) & 1.3 & 320 \\
(14, 16) & 1.4 & 526 & (15, 18) & 0.6 & 306 \\
(16, 19) & 1.9 & 512 & (17, 18) & 0.5 & 249 \\
(18, 21) & 0.7 & 476 & (19, 22) & 0.6 & 327 \\
(19, 20) & 0.5 & 370 & (21, 23) & 1.0 & 391 \\
(22, 25) & 0.6 & 448 & (23, 24) & 3.6 & 626 \\
(24, 25) & 0.5 & 434 & (25, 26) & 0.5 & 313 \\ \hline\hline
\end{tabular}
\caption{Two-qubit gate error and length in \% and ${\rm ns}$, respectively, reported by \emph{ibmq\_mumbai} on 26.07.2022.\label{tab:mumbai_gate}}
}
\end{table}

\clearpage

\end{document}